\tikzset{place/.style={circle}}
\tikzset{>=stealth, auto, node distance=2.5cm, every loop/.style={->, min distance=10mm, in=0, out=60, looseness=10}}
\def\bp{\textbf{p}}
\def\rar{\rightarrow}
\def\lrar{\leftrightarrow}
\def\beq{\begin{equation}}
\def\eeq#1{\label{#1}\end{equation}}
\def\ba{\begin{array}}
\def\ea{\end{array}}
\def\mc{\mathcal{C}}
\def\mp{\mathcal{P}}
\def\mb{\mathcal{B}}
\def\ma{\mathcal{A}}
\def\mq{\mathcal{Q}}
\def\me{\mathcal{E}}
\def\mo{\mathcal{O}}
\def\bp{{\bf p}}
\newtheorem{theorem}{Theorem}
\newtheorem{lemma}{Lemma}
\newtheorem{prop}{Proposition}
\newtheorem{define}{Definition}
\title[Infinitary Formulas with Extensional Atoms]{\bf 
Stable Models for Infinitary Formulas\\
with Extensional Atoms} 
\author[A. Harrison and V. Lifschitz]{AMELIA HARRISON, VLADIMIR LIFSCHITZ\\  
University of Texas, Austin, Texas, USA \\
\email{ameliaj,vl@cs.utexas.edu}\\
}
\begin{document}

\date{}
\maketitle

\begin{abstract}
The definition of stable models for propositional formulas with infinite 
conjunctions
and disjunctions can be used to describe the semantics of answer set programming
languages.  In this note, we enhance that definition by  
introducing a distinction between intensional and extensional
atoms.  The symmetric splitting theorem for 
first-order formulas is then extended to
infinitary formulas and used to reason about infinitary definitions. This note 
is under consideration for publication in Theory and Practice of Logic 
Programming.  
\end{abstract}

\section{Introduction}

The original definition of a stable model \citep{gel88} was applicable only to 
quantifier-free formulas of a restricted syntax.  
Stable models for arbitrary first-order sentences were defined by \cite{fer07a} 
using the stable model operator SM. This definition can be used to define the 
semantics of some rules with aggregate expressions. For instance, the following 
rule, written in the input language of the ASP system {\sc 
clingo},\footnote{\sc http://potassco.sourceforge.net}
\beq
{\text {\tt q  :- \#count\{X:p(X)\} = 0}} 
\eeq{eq:r1}
can be identified with the first-order sentence 
\beq
\forall x \; \neg p(x) \rar q.
\eeq{eq:r1fo} 

In \cite{fer09}, that definition was generalized to  
allow a distinction 
between ``extensional'' and ``intensional'' predicate symbols. (Under the 
original definition all predicate symbols are treated as intensional.) 
Intuitively, an intensional predicate is one 
whose extent is defined by the program, while all other, extensional, 
predicates are defined externally. Similar 
distinctions have been proposed many 
times: \cite{DBLP:journals/ijseke/GelfondP96} distinguish between input and 
output predicates in their ``lp-functions'',
\cite{oik08} distinguish between input and output 
atoms, and \cite{lier11} between input and non-input atoms. These distinctions 
are useful because they 
allow for a modular view of logic programs. 
For example, in the splitting theorem from \cite{fer09a}, the authors showed 
that stable models for a program can sometimes be computed by breaking 
the program into parts and computing the stable models of each part separately 
using different sets of intensional predicates.

Using the approach proposed by \cite{fer05}, \cite{tru12} extended the 
definition of a stable model in a different direction: he showed how to apply 
this concept to infinitary propositional formulas. He also showed that the 
definition of first-order stable models in terms of the 2007 definition of the 
operator SM could be reduced to the 
definition of infinitary stable models. Infinitary 
stable models were used in that paper as a tool for relating first-order stable 
models to the semantics of first-order logic with inductive definitions.
Infinitary stable models are important also because they 
provide an alternative understanding of the semantics of aggregates. For 
instance, rule \eqref{eq:r1} can be identified with the infinitary formula
\beq
\bigwedge_{t} \neg p(t) \rar q,  
\eeq{eq:r1if}
where the conjunction in the antecedent is understood as ranging over all 
ground terms~$t$ not containing arithmetic operations. The advantage of this 
approach over the use of first-order formulas is that it is more flexible. For 
example, it is applicable to 
aggregates involving {\tt \#sum}. In recent work, \cite{geb15} use this idea 
to define a precise semantics for a large class of ASP programs, 
including programs with local variables and aggregate expressions. 

However, Truszczynski's definition of stable models for infinitary formulas 
does not allow a distinction between 
extensional and intensional atoms. It treats all atoms as intensional.  In 
this note, we generalize the definition of  stable models for infinitary 
formulas to accommodate both intensional and  extensional 
atoms, and we study properties of this definition. As might be expected, 
the definition of first-order stable models with extensional predicates can be 
reduced to the definition proposed in this note. We use this definition to 
generalize the results on first-order splitting from \cite{fer09a}. 
In particular, we look at the splitting lemma from  \cite{fer09a}, which showed  
that  under certain conditions the stable models of a formula can be computed by 
computing the stable models of the same formula with respect to smaller sets 
of intensional predicates. We find that a straightforward infinitary counterpart 
to the splitting lemma does not hold, and show how the lemma needs to be 
modified for the infinitary case. The situation is similar for the splitting 
theorem discussed above. The infinitary splitting theorem is used to generalize 
the lemma on explicit definitions due to \cite{fer05}, which describes how 
adding explicit definitions to a program affects its stable models. In the 
version presented in this note, the program can include infinitary formulas 
and the definition can be recursive. 

\section{Review: Infinitary Formulas and their Stable Models}\label{sec:ifandsm} 

This review follows \cite{tru12}, \cite{har15a}.
Let $\sigma$ be a propositional signature,
that is, a set of propositional atoms.  For every nonnegative integer~$r$,
{\it (infinitary propositional) formulas (over $\sigma$) of rank~$r$} are
defined recursively, as follows:
\begin{itemize}
\item every atom from~$\sigma$ is a formula of rank~0,
\item if $\mathcal{H}$ is a set of formulas, and~$r$ is the smallest
nonnegative 
integer that is greater than the ranks of all elements of $\mathcal{H}$,
then $\mathcal{H}^\land$ and $\mathcal{H}^\lor$ are formulas of rank~$r$,
\item if $F$ and $G$ are formulas, and~$r$ is the smallest nonnegative
integer that is greater than the ranks of~$F$ and~$G$, then $F\rar G$ is a
formula of rank~$r$.
\end{itemize}
We will write $\{F,G\}^\land$ as $F\land G$, and
$\{F,G\}^\lor$ as $F\lor G$.
The symbols $\top$ and $\bot$ will be understood as abbreviations 
for~$\emptyset^{\land}$ and $\emptyset^{\lor}$ respectively; 
$\neg F$ stands for $F\rar\bot$, and $F\lrar G$ 
stands for \hbox{$(F\rar G)\land(G\rar F)$}.
These conventions allow us to view
finite propositional formulas over~$\sigma$
as a special case of infinitary formulas.

A set or family of formulas is {\it bounded} if the ranks of its members
are bounded from above.  For any bounded family $(F_\alpha)_{\alpha \in A}$
of formulas, the formula $\{F_\alpha:~{\alpha~\in~A}\}^\land$ will be denoted 
by $\bigwedge_{\alpha \in A} F_\alpha$, and similarly for disjunctions. 

Subsets of a signature~$\sigma$ will be also called {\it interpretations}
of~$\sigma$.
The satisfaction relation between an interpretation and a formula is
defined recursively, as follows:
\begin{itemize}
\item For every atom $p$ from $\sigma$, $I\models p$ if $p\in I$.
\item $I\models\mathcal{H}^\land$ if for every formula $F$ in~$\mathcal{H}$,
$I\models F$.
\item $I\models\mathcal{H}^\lor$ if there is a formula $F$ in~$\mathcal{H}$
such that $I\models F$.
\item $I\models F\rar G$ if $I\not\models F$ or $I\models G$.
\end{itemize}
An infinitary formula is {\it tautological} if it is satisfied by all 
interpretations. Two infinitary formulas are {\it equivalent} if they are 
satisfied by the same interpretations.

The {\it reduct} $F^I$ of a formula~$F$ w.r.t.~an interpretation~$I$ is
defined recursively, as follows:
\begin{itemize}
\item For every atom~$p$ from $\sigma$, $p^I$ is~$p$ if $p\in I$, and $\bot$
otherwise.
\item $(\mathcal{H}^\land)^I$ is $\{G^I\ |\ G\in\mathcal{H}\}^\land$.
\item $(\mathcal{H}^\lor)^I$ is $\{G^I\ |\ G\in\mathcal{H}\}^\lor$.
\item $(G\rar H)^I$ is $G^I\rar H^I$ if $I\models G\rar H$, and $\bot$
otherwise.
\end{itemize}
If $\mathcal{H}$ is a set of infinitary formulas then the {\it reduct} 
$\mathcal{H}^I$ is the set $\{F^I: F \in \mathcal{H}\}$. 
An interpretation~$I$ is a {\it stable model} of a set $\mathcal{H}$ of
formulas if it is minimal w.r.t.~set inclusion among the interpretations
satisfying the reduct $\mathcal{H}^I$.

\medskip

\noindent {\it Example}
\newline \noindent It is clear that $\{q\}$ is the only stable model of 
\eqref{eq:r1if}. Indeed, the reduct of \eqref{eq:r1if} w.r.t. $\{q\}$ 
is  
\beq
\top \rar q, 
\eeq{eq:qred}
and $\{q\}$ is a minimal model of this formula w.r.t. set inclusion. It 
is easy to see that \eqref{eq:r1if} has no other stable models.  

\section{$\ma$-stable Models}\label{sec:astable}
\label{ex:running}
Following \cite{fer09}, we will 
assume that some atoms in a program are designated ``intensional'' while all 
others are regarded as ``extensional''. 

Recall that $\sigma$ denotes a propositional signature. Let $\mathcal{A} 
\subseteq \sigma$ be a 
(possibly infinite) set of atoms. The partial order $\leq_\ma$ is defined as 
follows: for any sets $I,J \subseteq \sigma$, we say that $I \leq_\ma J$ if 
$I \subseteq J$ and $J \setminus I \subseteq \ma$. (Intuitively, if the atoms 
in $\ma$ are treated as intensional and all other atoms from $\sigma$ are 
treated as extensional, the relation holds if $I \subseteq J$ and~$I,J$ 
agree on all extensional atoms.) 
An interpretation $I$ is called an 
(infinitary) $\mathcal{A}${\it-stable model} of a formula $F$ if it is a minimal 
model of $F^I$ w.r.t. $\leq_\ma$.

Observe that if $\ma = \sigma$ then $\ma$-stable models of a formula $F$ are 
the same as stable models. If $\ma = \emptyset$ then $\ma$-stable models are 
all models of $F$. Truszczynski observed that an interpretation 
$I$ satisfies $F$ iff $I$ satisfies $F^I$ \cite[Proposition 1]{tru12}. It 
follows that all $\ma$-stable models of $F$ also satisfy $F$. 

\medskip

\noindent{\it Example (continued)}
\newline \noindent To illustrate the definition of $\ma$-stability, let's find 
all $\{q\}$-stable 
models\footnote{ Here, we understand $\sigma$ as implicitly defined to be the 
set containing $q$ and all atoms of the form~$p(t)$ where $t$ is a ground 
term.} of \eqref{eq:r1if}. 
The stable model $\{q\}$ of \eqref{eq:r1if} is $\{q\}$-stable as well, because 
it is a minimal model of \eqref{eq:qred} w.r.t. $\leq_{\{q\}}$. On the 
other hand, any non-empty set $\mp$ of atoms of the form~$p(t)$ is 
$\{q\}$-stable too. Indeed, the reduct of \eqref{eq:r1if} w.r.t. 
such a set is an implication whose antecedent has $\bot$ as one of its 
conjunctive terms. Such a formula is tautological so that it is satisfied by 
$\mp$. Furthermore, $\mp$ is a minimal model w.r.t. $\leq_{\{q\}}$
since any subset of $\mp$ will disagree with it on extensional atoms. 

\medskip
  
The fact that all stable models of \eqref{eq:r1if} are also $\{q\}$-stable is an 
instance of a more general fact: If $I$ is an $\ma$-stable model of $F$ and 
$\mb$ is a subset of $\ma$ then $I$ is also a $\mb$-stable model of~$F$. 
This follows directly from the definition of $\ma$-stability.

The following proposition provides two alternative definitions for 
$\ma$-stability. 
\begin{prop}\label{thm:astable}
The following three conditions are equivalent:
\begin{enumerate}
\item[(i)] $I$ is an $\ma$-stable model of $F$;
\item[(ii)] $I$ is a minimal model (w.r.t. set inclusion) of 
\beq
F^I \land \bigwedge_{p \in I \setminus \mathcal{A}} p;
\eeq{eq:modred}
\item[(iii)] $I$ is a stable model of 
\beq
F \;\; \land \;\; \bigwedge_{p \in \sigma \setminus \mathcal{A}} (p \lor \neg 
p). 
\eeq{eq:extchoice}
\end{enumerate}
\end{prop}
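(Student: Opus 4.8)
The plan is to establish the two equivalences (i)$\Leftrightarrow$(ii) and (ii)$\Leftrightarrow$(iii) separately. The first is essentially a rephrasing that hinges on a single observation about the order $\leq_\ma$, while the second rests on computing the reduct of the formula in~(iii) and recognizing it, up to equivalence, as the formula in~(ii). Throughout I would read ``$\ma$-stable model of $F$'' as: $I\models F^I$ and no smaller interpretation witnesses non-minimality in the relevant order.

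For (i)$\Leftrightarrow$(ii), I would first note that both conditions require $I\models F^I$: this is explicit in~(i), and in~(ii) it is part of $I$ being a model of $F^I\land\bigwedge_{p\in I\setminus\ma}p$, the second conjunct being satisfied by~$I$ trivially since each such $p$ lies in $I$. The heart of the matter is then that for every $J\subseteq I$ the condition $I\setminus J\subseteq\ma$ is equivalent to $I\setminus\ma\subseteq J$, that is, to $J$ satisfying $\bigwedge_{p\in I\setminus\ma}p$; intuitively, $I\setminus J\subseteq\ma$ says exactly that no extensional atom of $I$ is dropped in passing to $J$. Consequently the proper subsets $J\subsetneq I$ that witness non-minimality are literally the same in the two formulations: in~(i) they are the models $J$ of $F^I$ with $J\leq_\ma I$ and $J\neq I$, and in~(ii) they are the proper subsets of $I$ satisfying both $F^I$ and $\bigwedge_{p\in I\setminus\ma}p$. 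Since minimality with respect to set inclusion compares $I$ only against its subsets, the two minimality conditions coincide.

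For (ii)$\Leftrightarrow$(iii), recall that $I$ is a stable model of the formula in~(iii), call it $G=F\land\bigwedge_{p\in\sigma\setminus\ma}(p\lor\neg p)$, iff $I$ is a minimal model, with respect to set inclusion, of $G^I$. I would compute $G^I$ from the recursive definition of the reduct, obtaining $G^I=F^I\land\bigwedge_{p\in\sigma\setminus\ma}(p\lor\neg p)^I$. The key local computation is that of $(p\lor\neg p)^I$: when $p\in I$ one has $p^I=p$ while $(\neg p)^I=\bot$ (since $I\not\models p\rar\bot$), so the disjunction reduces to something equivalent to $p$; when $p\notin I$ one has $p^I=\bot$ while $(\neg p)^I$ is tautological (since $I\models p\rar\bot$), so the whole disjunction reduces to a tautology. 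Hence $\bigwedge_{p\in\sigma\setminus\ma}(p\lor\neg p)^I$ is equivalent to $\bigwedge_{p\in I\setminus\ma}p$, using $(\sigma\setminus\ma)\cap I=I\setminus\ma$, and therefore $G^I$ is equivalent to $F^I\land\bigwedge_{p\in I\setminus\ma}p$, which is precisely the formula in~(ii). Equivalent formulas have the same models, hence the same minimal models with respect to set inclusion, so $I$ is a stable model of $G$ iff it satisfies condition~(ii).

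The individual steps are short, and the only place demanding real care is the per-atom reduct computation, because the clause for implications makes $(G\rar H)^I$ depend on whether $I\models G\rar H$; the case split $p\in I$ versus $p\notin I$ must be carried out against that clause, in particular for $\neg p=p\rar\bot$. I also want to be explicit that passing from ``equivalent reducts'' to ``same minimal models'' is legitimate: equivalence means the two reducts are satisfied by exactly the same interpretations, and minimality with respect to set inclusion depends only on that set of models. With these two points handled, the chain (i)$\Leftrightarrow$(ii)$\Leftrightarrow$(iii) is complete.
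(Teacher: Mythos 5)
Your proposal is correct and follows essentially the same route as the paper's own proof: the equivalence (i)$\Leftrightarrow$(ii) via the observation that, for $J\subseteq I$, the condition $I\setminus J\subseteq\ma$ is the same as $I\setminus\ma\subseteq J$, and the equivalence (ii)$\Leftrightarrow$(iii) by computing the reduct of \eqref{eq:extchoice} atom by atom (splitting on $p\in I$ versus $p\notin I$) and recognizing it as equivalent to \eqref{eq:modred}. The two points you flag as needing care --- the implication clause in the reduct for $\neg p$, and the fact that equivalent reducts have the same minimal models --- are exactly the right ones, and both are handled correctly.
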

\begin{proof}
We first establish that conditions (i) and (ii) are equivalent:
$I$ is an $\ma$-stable model of~$F$ 
\begin{align*}
\text{iff } & I \text{ is a minimal model of $F^I$ 
w.r.t. $\leq_\ma$}\\ 
\text{iff } & I \models F^I \text{ and there is no $J \subset I$ such that 
$J \models F^I$ and } I \setminus J \subseteq \ma \\
\text{iff } & I \models F^I \text{ and there is no $J \subset I$ such that $J 
\models F^I$ and } \forall p (p \in I \land p \not \in J \rar p \in \ma) \\
\text{iff } & I \models F^I \text{ and there is no $J \subset I$ such that $J 
\models F^I$ and } \forall p (p \in I \land p \not \in \ma \rar p \in J) \\
\text{iff } & I \models F^I \text{ and there is no $J \subset I$ such that $J 
\models F^I$ and } I \setminus \ma \subseteq J\\
\text{iff } & I \models F^I \land \bigwedge_{p \in I \setminus \ma} p  \text{ 
and there is no $J \subset I$ such that } 
J \models F^I \land \bigwedge_{p \in I \setminus \ma} p\\
\text{iff } & I \text{ is an minimal model of  \eqref{eq:modred}.}
\end{align*} 

Finally, we will establish that conditions (ii) and (iii) are equivalent. It is 
easy to see that the reduct of \eqref{eq:extchoice} is equivalent to 
\eqref{eq:modred}: 
\begin{align*}
& F^I \;\; \land \;\; \left ( \bigwedge_{p \in \sigma \setminus \mathcal{A}} (p \lor \neg 
p)\right )^I  \\
\lrar \;\;\;\;\; & F^I \;\; \land \;\;  \bigwedge_{p \in \sigma \setminus \mathcal{A}} (p^I \lor (\neg 
p)^I)  \\
\lrar \;\;\;\;\; & F^I \;\; \land \;\;  \bigwedge_{p \in I \setminus \mathcal{A}} (p^I \lor (\neg 
p)^I) \;\; \land \;\; 
\bigwedge_{p \in \sigma \setminus (I \cup \mathcal{A})} (p^I \lor (\neg p)^I) 
 \\
\lrar \;\;\;\;\; & F^I \;\; \land \;\;  \bigwedge_{p \in I \setminus \mathcal{A}} (p \lor 
\bot) \;\; \land \;\; 
\bigwedge_{p \in \sigma \setminus (I \cup \mathcal{A})} (\bot \lor \top) 
 \\
\lrar \;\;\;\;\; & F^I \;\; \land \;\;  \bigwedge_{p \in I \setminus \mathcal{A}} p . \\
\end{align*} 
So $I$ is a minimal model of \eqref{eq:modred} iff it is a 
stable model of \eqref{eq:extchoice}. 
\end{proof}

\section{Relating Infinitary and First-Order $\ma$-Stable Models}

As mentioned in the introduction, \cite{tru12} showed that infinitary stable models can be viewed as 
a generalization of first-order stable models in the sense of \cite{fer09}. In 
this section, we will show that the corresponding result holds for $\bp$-stable 
models as well.\footnote{The definition of ${\bp}$-stable models, where $\bp$ is 
a list of distinct predicate symbols, can be found in \cite{fer09}, Section 
2.3.}  First, we review Truszczynski's results.   

Let $\Sigma$ be a first-order signature, and $I$ be an interpretation of 
$\Sigma$ with non-empty domain~$|I|$. For each element~$u$ of~$|I|$, by~$u^*$ we 
denote a new object constant, called the {\it name of~$u$}. By~$\Sigma^{|I|}$ we 
denote the signature obtained by adding the names of all elements of~$|I|$ 
to~$\Sigma$. An interpretation $I$ is identified with its extension~$I'$ 
to~$\Sigma^{|I|}$ in which for each~$u$ in~$|I|$,~$I'(u^*) = u$.  
By~$A_{\Sigma, I}$ we denote the set of all atomic sentences over 
$\Sigma^{|I|}$ built with relation symbols from~$\Sigma$ and names of elements in 
$|I|$, and by $I^r$ we denote the subset of $A_{\Sigma, I}$ that describes in 
the obvious way the extents of the relations in $I$.  Let $F$ be a 
formula over signature~$\Sigma^{|I|}$. Then the 
{\it grounding of~$F$ w.r.t. 
$I$, gr$_I(F)$} is defined recursively, as follows:
\begin{itemize}
\item gr$_I(\bot)$ is $\bot$;
\item gr$_I(p(t_1, \dots, t_k))$ is $p((t_1^I)^*, \dots, (t_k^I)^*)$;
\item gr$_I(t_1 = t_2)$ is $\top$ if $t_1^I = t_2^I$ and $\bot$ otherwise;
\item gr$_I(F \odot G)$ is gr$_I(F) \odot$ gr$_I(G)$, where $\odot \in \{\land, 
\lor, \rar\}$;   
\item gr$_I(\forall x F(x))$ is \{gr$_I(F^x_{u^*}) | u \in 
|I|\}^\land$;
\item gr$_I(\exists x F(x))$ is \{gr$_I(F^x_{u^*}) | u \in |I|\}^\lor$.   
\end{itemize}
(By $F^x_{u^*}$ we denote the result of substituting $u^*$ 
for all free occurrences of $x$ in~$F$.)   
It is clear that 
for any first-order sentence $F$ over signature $\Sigma$, gr$_I(F)$ is an infinitary 
formula over the signature $A_{\Sigma, I}$. 

\medskip

\noindent {\it Example (continued)} 
\newline \noindent If $\Sigma$ consists of the unary predicate 
$p$ and the  propositional symbol $q$, and $I$ is an 
interpretation of $\Sigma$ such that the domain~$|I|$ is the set of all ground 
terms $t$, then the grounding of~\eqref{eq:r1fo} w.r.t. $I$ 
is~\eqref{eq:r1if}. (To simplify notation we identify the name of each term~$t$ 
with~$t$.) 

\medskip

According to Theorem 5 from \cite{tru12}, if $F$ is a first-order
sentence and~$I$ is an interpretation, then~$I$ is a first-order stable model 
of~$F$ iff~$I^r$ is an infinitary stable model of gr$_I(F)$. The proposition below 
generalizes this result to the 
case of~$\bp$-stable models. 
By~$\bp^{I}$ we denote the 
atomic formulas in~$A_{\Sigma, I}$ built with 
predicates from~$\bp$. 

\medskip

\noindent {\it Example (continued)} 
\newline \noindent If~$\bp$ is~$p$ then~$\bp^{I}$ is the set 
of all atoms of the form~$p(t)$. 
\begin{prop}\label{thm:infpstable}
For any first-order sentence $F$ over $\Sigma$ 
and any tuple $\bp$ of distinct predicate symbols from~$\Sigma$, an 
interpretation~$I$ is a $\bp$-stable model of $F$ iff $I^r$ is a 
$\bp^{I}$-stable model of~gr$_I(F)$. 
\end{prop}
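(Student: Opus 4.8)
The plan is to reduce the statement to Theorem 5 of \cite{tru12} by replacing, on each side of the claimed equivalence, the choice of intensional symbols with explicit choice formulas, exactly as Proposition~\ref{thm:astable} does on the infinitary side. Let $G$ be the first-order sentence
\[ F \;\land\; \bigwedge_{q}\forall \mathbf{x}\,(q(\mathbf{x}) \lor \neg q(\mathbf{x})), \]
where $q$ ranges over the predicate symbols of $\Sigma$ that do not occur in $\bp$ and $\mathbf{x}$ is a tuple of distinct variables matching the arity of $q$. The first-order counterpart of the equivalence (i)$\Leftrightarrow$(iii) of Proposition~\ref{thm:astable} --- a standard property of the operator SM that I would cite from \cite{fer09} --- states that $I$ is a $\bp$-stable model of $F$ if and only if $I$ is a first-order stable model of $G$, that is, a model of $G$ that is stable with respect to \emph{all} of its predicate symbols.

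Granting this, I would chain the following equivalences. By the fact just recalled, $I$ is a $\bp$-stable model of $F$ iff $I$ is a first-order stable model of $G$. Applying Theorem 5 of \cite{tru12} to $G$, this holds iff $I^r$ is an infinitary stable model of gr$_I(G)$. I would then show that gr$_I(G)$ has the same infinitary stable models as
\[ \text{gr}_I(F) \;\land\; \bigwedge_{a \in A_{\Sigma,I}\setminus \bp^{I}}(a \lor \neg a). \]
Finally, the equivalence (i)$\Leftrightarrow$(iii) of Proposition~\ref{thm:astable}, instantiated with the signature $A_{\Sigma,I}$, the intensional set $\bp^{I}$, and the formula gr$_I(F)$, turns the last condition into: $I^r$ is a $\bp^{I}$-stable model of gr$_I(F)$. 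Composing the equivalences yields the proposition.

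The substance is in the penultimate step. Because grounding commutes with the propositional connectives, gr$_I(G)$ is gr$_I(F)$ conjoined with the groundings of the choice formulas. For a fixed $q$ outside $\bp$, iterating the grounding rule for $\forall$ rewrites $\forall \mathbf{x}\,(q(\mathbf{x}) \lor \neg q(\mathbf{x}))$ as the conjunction of $q(\mathbf{u}^*) \lor \neg q(\mathbf{u}^*)$ over all tuples $\mathbf{u}$ of elements of $|I|$ of the appropriate length, where $\mathbf{u}^*$ is the tuple of their names and one uses that grounding sends a name $u^*$ to itself. As $q$ and $\mathbf{u}$ vary, the atoms $q(\mathbf{u}^*)$ enumerate precisely $A_{\Sigma,I}\setminus \bp^{I}$, so the two formulas above have the same conjunctive leaves and differ only in how those leaves are bracketed. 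I expect this bracketing to be the one point demanding care, since infinitary stable models are not invariant under arbitrary classical equivalence; it is settled at the level of reducts, using that the reduct of a conjunction is the conjunction of the reducts and that satisfaction of a conjunction depends only on the set of its conjuncts, so that reassociating and flattening preserves the models of the reduct and hence the minimal ones. A secondary item to pin down is the first-order choice characterization invoked in the first step: it is the exact analogue of Proposition~\ref{thm:astable}(iii) and is standard, but the reference should be given precisely.
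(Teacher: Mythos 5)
Your proposal is correct and follows essentially the same route as the paper's proof: reduce $\bp$-stability to ordinary stability via first-order choice formulas (Theorem~2 of \cite{fer09}), bridge to the infinitary setting with Theorem~5 of \cite{tru12}, and return to $\bp^{I}$-stability of gr$_I(F)$ via Proposition~\ref{thm:astable}. The flattening/bracketing issue you isolate is real but is silently elided in the paper, which simply writes the grounding as formula~\eqref{eq:grI}; your reduct-level justification fills that small gap rather than changing the argument.
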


\noindent {\it Example (continued)} 
\newline \noindent Let $I$ be the interpretation that interprets $p$ as 
identically false and assigns the value $\top$ to $q$. Then $I^r$ is $\{q\}$.  
Let $J$ be an interpretation that satisfies at least one atomic formula $p(t)$ 
and assigns the value $\bot$ to $q$. Then $J^r$ is $\{p(t)\; |\; J \models 
p(t)\}$ (the same as $\mp$ from the previous section).  We saw in the 
previous section that $\{q\}$-stable models of~\eqref{eq:r1if} are~$\{q\}$
and any non-empty set of atoms of the form $p(t)$. In accordance with the 
proposition above, $I$ and $J$ are $\{q\}$-stable models of~\eqref{eq:r1fo}.  

\begin{proof}[Proof of Proposition~\ref{thm:infpstable}]
Consider a first-order sentence $F$ and list of distinct predicate symbols~$\bp.$ 
Let~$\mq$ be the set of all predicates occurring in 
$F$ but not in~$\bp$. Consider an 
interpretation $I$ of the signature of $F$. By Theorem~2 from 
\cite{fer09}, 
$I$ is a $\bp$-stable 
model of $F$ iff  
it is a stable model of
$$
F \land \bigwedge_{q \in \mq} \forall {\bf x}(q({\bf x}) \lor \neg q({\bf x})),
$$ 
where ${\bf x}$ is a list of distinct object variables the same length as the 
arity of $q$. By Theorem~5 from \cite{tru12},~$I$ is a stable model of the 
formula above iff~$I^r$ is a stable model of the grounding of this formula 
w.r.t.~$I$.  The grounding of the formula above w.r.t.~$I$ is  
\beq
{\text gr}_I(F) \land  \bigwedge_{q \in \mq \atop A \in q^I} 
\left ( A \lor \neg A \right ). 
\eeq{eq:grI} 
By Proposition \ref{thm:astable}, $I^r$ is a stable model of \eqref{eq:grI} 
iff it is a $\bp^{I}$-stable model of~gr$_I(F)$. 
\end{proof} 
 
\section{Review: First-Order Splitting Lemma}\label{sec:fosplitting}

The lemma presented in the next section of this note is a generalization of 
the splitting lemma from \cite{fer09a}.

In order to state that lemma, we first 
review the definition of the predicate 
dependency graph given in that paper. We say that an 
occurrence of a predicate symbol or a subformula in a first-order formula $F$ 
is {\it positive} if it occurs in the antecedent of an even number of 
implications and {\it strictly positive} if it occurs in the antecedent of no 
implication. An occurrence of a predicate constant is said to be {\it negated} 
if it belongs to a subformula of the form $\neg F$, and {\it nonnegated} 
otherwise. A {\it rule} of a first-order formula $F$ is a strictly positive 
occurrence of an implication in $F$. The {\it (positive) predicate dependency 
graph} of a first-order formula $F$ w.r.t. a list ${\bf p}$ of 
distinct predicates, denoted DG$_{\bp}[F]$ is the directed graph that 
\begin{itemize}
\item has all predicate symbols in ${\bp}$ as its vertices, and
\item has an edge from $p$ to $q$ if, for some rule $G \rar H$ of $F$, 
\begin{itemize}
\item $p$ has a strictly positive occurrence in $H$, and
\item $q$ has a positive nonnegated occurrence in $G$. 
\end{itemize}
\end{itemize}
 
We say that a partition\footnote{We understand a partition of $X$ to be a set of 
disjoint subsets (possibly empty) that cover~$X$.} $\{\bp_1, \bp_2\}$ of the 
vertices in a graph $G$ is {\it separable (on $G$)} if every strongly 
connected component of $G$ is a subset of either $\bp_1$ or~$\bp_2$. (Here, 
we identify the list ${\bf p}$ with the set of its members.) 

The following assertion is a reformulation of Version 1 of the splitting lemma 
from \cite{fer09a}.  

\medskip

\noindent {\it Splitting Lemma} 
\newline \noindent If $F$ is 
a first-order sentence and $\bp_1,\bp_2$ are lists of distinct predicate symbols 
such that the partition $\{\bp_1,\bp_2\}$ is  separable on DG$_{\bp_1\bp_2}[F]$ 
then~$I$ 
is a ${{\bp}_1{\bp}_2}$-stable model of $F$ iff it is 
both  
a ${\bp}_1$-stable model and a ${\bp}_2$-stable 
model of $F$.  

\section{Infinitary Splitting Lemma}\label{sec:ifsplitting}

The statement of the infinitary splitting lemma refers to the positive dependency
graph of an 
infinitary formula. As we will see, the vertices of this graph correspond to 
intensional atoms. This definition is similar to the definition of a predicate 
dependency graph given in \cite{fer07} and \cite{fer09a} and reviewed in the 
previous section. The concepts 
necessary to define the dependency graph of an infinitary formula are 
all straightforward extensions of the concepts used in the previous section to define the predicate 
dependency graph in the first-order case. However, 
because infinitary formulas are not syntactic structures, we have to define 
these concepts recursively. 

We define the set of {\it strictly positive atoms} of an infinitary 
formula~$F$, denoted P($F$), recursively, as follows:
\begin{itemize}
\item For every atom $p \in \sigma$, P($p$) is $\{p\}$; 
\item P($\mathcal{H}^\land$) is $\bigcup_{H \in 
\mathcal{H}}$ P($H$), and so is P($\mathcal{H}^\lor$); 
\item P($G \rar H$) is P($H$).
\end{itemize}
 
The set of {\it positive nonnegated atoms} and the set of 
{\it negative nonnegated atoms} of an infinitary formula $F$, denoted Pnn($F$) 
and Nnn($F$) respectively, were introduced in \cite{lif12a}. These sets are 
defined recursively as well:
\begin{itemize}
\item For every atom $p \in \sigma$, Pnn($p$) is $\{p\}$; 
\item Pnn($\mathcal{H}^\land$) is $\bigcup_{H \in 
\mathcal{H}}$ Pnn($H$), and so is Pnn($\mathcal{H}^\lor$); 
\item Pnn($G \rar H$) is $\emptyset$ if $H$ is $\bot$ and Nnn($G$) $\cup$ 
Pnn($H$) otherwise.  
\end{itemize}
and
\begin{itemize}
\item For every atom $p \in \sigma$, Nnn($p$) is $\emptyset$; 
\item Nnn($\mathcal{H}^\land$) is $\bigcup_{H \in 
\mathcal{H}}$ Nnn($H$), and so is Nnn($\mathcal{H}^\lor$); 
\item Nnn($G \rar H$) is $\emptyset$ if $H$ is $\bot$ and 
Pnn($G$) $\cup$ Nnn($H$) otherwise.  
\end{itemize}
The set of {\it rules} of an infinitary formula is defined 
as follows:
\begin{itemize}
\item The rules of $G \rar H$ are $G \rar H$ and all rules of $H$;
\item The rules of $\mathcal{H}^\land$ and $\mathcal{H}^\lor$ are the rules of 
all formulas in $\mathcal{H}$.  
\end{itemize}

\noindent {\it Example (continued)}
\newline \noindent
The set of positive nonnegated atoms in formula \eqref{eq:r1if} is 
the same as the set of strictly positive atoms: $\{q\}$. The only rule of  
formula \eqref{eq:r1if} is the formula itself. 

\medskip

For any infinitary formula $F$ the {\it (positive) dependency graph} of $F$ 
(relative to a set of atoms $\mathcal{A})$, 
denoted DG$_{\mathcal{A}}[F]$, 
is the directed graph, that 
\begin{itemize}
\item has all atoms in $\ma$ as its vertices, and
\item has an edge from $p$ to $q$ if, for some rule $G \rar H$ of $F$, 
\begin{itemize}
\item $p$ is an element of P($H$), and
\item $q$ is an element of Pnn($G$). 
\end{itemize}
\end{itemize}

The following statement appears to be a plausible counterpart to the splitting 
lemma reproduced in Section \ref{sec:fosplitting} for infinitary formulas:

\beq
\tag{$\ast$}
\parbox{\dimexpr\linewidth-4em}{%
If $F$ is 
an infinitary formula and $\mp_1,\mp_2$ are sets of atoms such 
that 
the partition  $\{\mp_1,\mp_2\}$ is separable on 
DG$_{\mp_1 \cup \mp_2}[F]$ 
then $I$ 
is a ${{\mathcal{P}}_1 \cup {\mathcal{P}}_2}$-stable model of $F$ iff it is 
both  
a ${{\mathcal{P}}_1}$-stable model and a ${{\mathcal{P}}_2}$-stable 
model of $F$.
  \strut
}
\eeq{quote:wrong} 

But this statement does not hold; 
in the case of 
infinitary formulas separability is not a sufficient condition to ensure 
splittability. Let $F$ be the infinitary conjunction 
$$
\bigwedge_{n} \left (p_{n+1} \rar p_n \right ), 
$$
where the conjunction extends over all integers $n$. 
Let $\mp$ be the set of all atoms $p_n$. Let $\mp_1$ be the set 
$\{p_n \;|\; n \text{ is even}\}$, and $\mp_2$ be the set  
$\{p_n \;|\; n \text{ is odd}\}.$ 
Then the partition $\{\mp_1, \mp_2\}$ is separable on DG$_\mp[F]$ (shown in 
Figure~\ref{fig:one}). Indeed, the strongly connected components of this graph 
are singletons.  If $I$ is the set of all atoms $p_n$ 
then the reduct of~$F$ w.r.t.~$I$ is~$F$
itself. It is easy to check that $I$ is a $\mp_1$-stable model as well as a 
$\mp_2$-stable model of~$F$, but is not $\mp$-stable.  
This counterexample shows that ($\ast$) is incorrect.  

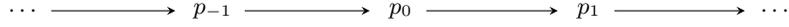
\begin{figure}\centering
\begin{tikzpicture}[shorten >=2pt, auto]
  \node[place] (n) {$p_0$};
  \node[place] (n-1) [left of=n] {$p_{-1}$};
  \node[place] (n+1) [right of=n] {$p_1$};
  \node[place] (n-2) [left of=n-1] {$\dots$};
  \node[place] (n+2) [right of=n+1] {$\dots$};
  \path[->]
     (n-2) edge[->] node[below] {} (n-1)
     (n-1) edge[->] node[below] {} (n)
     (n) edge[->] node[below] {} (n+1)
     (n+1) edge[->] node[below] {} (n+2);
\end{tikzpicture}
\caption{Any partition of the vertices in this graph is separable.}
\label{fig:one}
\end{figure}
In order to extend the splitting lemma to infinitary formulas, we will 
need a stronger notion of separability.  
An {\it infinite walk} $W$ of a directed graph $G$ is an infinite sequence $(v_1, v_2, \dots)$ of 
vertices occurring in $G$, such that each pair $v_i, v_{i+1}$ in $W$ corresponds 
to an edge in $G$. A partition $\{\mp_1, \mp_2\}$ of the vertices in $G$ will be 
called {\it infinitely separable (on~$G$)} if every infinite walk $(v_1, 
v_2, \dots)$ of $G$ visits either $\mp_1$ or $\mp_2$ finitely many times, 
that is either $\{i: v_i \in \mp_1\}$ or $\{i: v_i \in \mp_2\}$ is finite.  
\begin{prop}
For any graph $G$, 
\begin{enumerate}
\item[(i)] every infinitely separable partition of $G$ is separable, and
\item[(ii)] if $G$ has finitely many strongly connected components and partition 
$\{\mp_1, \mp_2\}$ is separable on $G$ then it is infinitely separable on $G$.
\end{enumerate}
\end{prop}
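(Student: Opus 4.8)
The plan is to prove the two parts separately, in both cases exploiting the strongly connected components (SCCs) of $G$ and their condensation into a directed acyclic graph. For part (i) I would argue by contraposition: assuming that $\{\mp_1,\mp_2\}$ is \emph{not} separable, I would construct an infinite walk witnessing the failure of infinite separability. If the partition is not separable, some SCC $C$ meets both $\mp_1$ and $\mp_2$, so we may choose vertices $a \in C \cap \mp_1$ and $b \in C \cap \mp_2$ with $a \neq b$. Since $C$ is strongly connected, there is a finite walk from $a$ to $b$ and a finite walk from $b$ to $a$ lying in $C$; concatenating these two walks repeatedly yields an infinite walk $(v_1,v_2,\dots)$ in which both $a$ and $b$ occur infinitely often. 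As $a \in \mp_1$ and $b \in \mp_2$, both $\{i : v_i \in \mp_1\}$ and $\{i : v_i \in \mp_2\}$ are then infinite, contradicting infinite separability.

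For part (ii) the central observation is that along any infinite walk the SCC containing the current vertex can change only finitely often. I would contract each SCC of $G$ to a single node to obtain its condensation, which is acyclic, and, because $G$ has finitely many SCCs, is a finite directed acyclic graph. Fixing a topological ordering of its nodes, I would assign to each vertex of $G$ the rank of its SCC in that ordering. Whenever $(v_i,v_{i+1})$ is an edge, either $v_i$ and $v_{i+1}$ lie in the same SCC, so their ranks are equal, or they lie in distinct SCCs, in which case acyclicity forces the rank to increase strictly. Hence along an infinite walk the sequence of ranks is nondecreasing and bounded above by the number of SCCs, so it is eventually constant: there are an index $N$ and a single SCC $C^{*}$ with $v_i \in C^{*}$ for all $i \geq N$.

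It then remains to combine this stabilization with separability. Since $\{\mp_1,\mp_2\}$ is separable, $C^{*}$ is contained in $\mp_1$ or in $\mp_2$; say $C^{*} \subseteq \mp_1$. Because $\mp_1$ and $\mp_2$ are disjoint, $v_i \notin \mp_2$ for every $i \geq N$, so $\{i : v_i \in \mp_2\}$ is finite. Thus every infinite walk visits at least one of the two parts only finitely often, which is exactly infinite separability.

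I expect the eventual-stabilization argument in part (ii) to be the one demanding care, since it is precisely where the finiteness of the set of SCCs is used in an essential way; the claim is false without that hypothesis, as the two-way infinite path in Figure~\ref{fig:one} already illustrates. The remaining ingredients---building the alternating infinite walk in part (i) and invoking disjointness of the partition to finish part (ii)---are routine.
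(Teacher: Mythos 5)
Your proposal is correct. Part (i) is essentially the paper's argument: both proceed by contraposition, pick a strongly connected component meeting both $\mp_1$ and $\mp_2$, and concatenate walks between the two witnesses to build an infinite walk hitting both parts infinitely often. For part (ii) you take a genuinely different, direct route: you pass to the condensation, which is a finite DAG, fix a topological ordering, and observe that the rank along any infinite walk is nondecreasing and bounded, hence eventually constant, so every infinite walk is eventually trapped in a single strongly connected component $C^{*}$; separability then confines the tail to one part. The paper instead argues the contrapositive with a pigeonhole step: if some infinite walk visits both parts infinitely often, then (finiteness of the set of components) some component meeting $\mp_1$ and some component meeting $\mp_2$ are each visited infinitely often, hence mutually reachable, hence equal, contradicting separability. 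Both arguments use the finiteness hypothesis in the same essential place, and both are sound; your version is slightly longer but isolates a clean, reusable structural fact (eventual stabilization of infinite walks in a graph with finitely many strongly connected components), whereas the paper's pigeonhole contrapositive is more economical.
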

\begin{proof}
(i) We will prove the contrapositive: if $\{\mp_1, \mp_2\}$ is a partition that  
is not separable on $G$, then there is some strongly connected component of 
$G$ that contains at least one vertex from $\mp_1$ and at least one vertex 
from $\mp_2$. Let's call these vertices $v$ and $w$, respectively. 
Since $v$ and $w$ are in the same strongly connected component, each vertex is 
reachable from the other. Then there is an infinite walk that 
visits each of these vertices (and therefore both $\mp_1$ and $\mp_2$) 
infinitely many times, so that the partition is not infinitely separable on 
$G$. 
\smallskip
\newline\noindent (ii) Again we prove the contrapositive: 
if $\{\mp_1, \mp_2\}$ is  a partition that is not infinitely separable on $G$, 
then there is some infinite walk 
$(v_1, v_2, \dots )$ of $G$ that visits both~$\mp_1$ and~$\mp_2$ infinitely 
many times. Since there are only finitely many strongly connected components 
in $G$,  at least one strongly connected component of $\mp_1$ and 
at least one strongly connected component of $\mp_2$ must be visited 
infinitely many times. Call these strongly connected components $C_1$ and $C_2$ 
respectively; then~$C_1$ must be reachable from~$C_2$ and vice versa. Then $C_1 
= C_2$ so that the partition is not separable on $G$.
\end{proof}

Claim ($\ast$) will become correct if we require the partition $\{\mp_1, 
\mp_2\}$ to be infinitely separable: 

\medskip

\noindent {\it Infinitary Splitting Lemma}  
\newline \noindent If $F$ is 
an infinitary formula and $\mp_1,\mp_2$ are sets of atoms such 
that the partition $\{\mp_1,\mp_2\}$ is infinitely separable on 
DG$_{\mp_1 \cup \mp_2}[F]$ then $I$ 
is a    
${{\mathcal{P}}_1 \cup {\mathcal{P}}_2}$-stable model of $F$ iff it is 
both  
a ${{\mathcal{P}}_1}$-stable model and a ${{\mathcal{P}}_2}$-stable 
model of $F$.  

\medskip

The splitting lemma reproduced in Section \ref{sec:fosplitting} is a 
consequence of the infinitary 
splitting lemma in view of Theorem \ref{thm:infpstable} and the following fact:

\begin{prop}\label{prop:fo2inf}
For any first-order sentence $F$ and tuple $\bp$ of distinct predicate symbols, if 
$\{\bp_1, \bp_2\}$ is a partition of $\bp$ that is separable on DG$_\bp[F]$,  
then for any interpretation $I$, 
$\{\bp_1^I, \bp_2^I\}$  
is infinitely separable on DG$_{\bp^I}[{\text gr}_I(F)]$.
\end{prop}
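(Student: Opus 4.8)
The plan is to reduce the infinitary notion of separability to the finite, first-order one through the projection $\pi$ that sends each ground atom in $\bp^I$ to its predicate symbol in $\bp$. The heart of the matter is an edge-projection lemma: whenever DG$_{\bp^I}[\text{gr}_I(F)]$ has an edge from $A$ to $B$, the graph DG$_\bp[F]$ has an edge from $\pi(A)$ to $\pi(B)$. Since $\bp$ is a finite tuple, DG$_\bp[F]$ is a finite graph, so once the lemma is available the rest becomes a pigeonhole argument about infinite walks in a finite graph.

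To prove the edge-projection lemma I would first establish, by structural induction on a first-order formula $G$ over $\Sigma^{|I|}$, three one-directional statements tying grounding to the occurrence notions of Section~\ref{sec:fosplitting}: (a) if $A$ belongs to P(gr$_I(G)$) then $\pi(A)$ has a strictly positive occurrence in $G$; (b) if $A$ belongs to Pnn(gr$_I(G)$) then $\pi(A)$ has a positive nonnegated occurrence in $G$, proved simultaneously with the dual statement for Nnn and negative nonnegated occurrences; and (c) every rule of gr$_I(F)$ has the form gr$_I(G\theta) \rar \text{gr}_I(H\theta)$ for some rule $G \rar H$ of $F$ and some substitution $\theta$ of names for the variables governing that rule. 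The atomic, equality, conjunction, and disjunction cases are immediate; the only work is in the implication and quantifier clauses. Across an implication, parity of positivity flips while nonnegatedness is preserved, which is exactly what the recursive clauses for Pnn and Nnn record; and substituting a name $u^*$ for a bound variable never alters which predicate occurs at a position or the type of that occurrence, which handles the quantifier clauses. Given an edge from $A$ to $B$ witnessed by a rule $G' \rar H'$ of gr$_I(F)$ with $A$ in P($H'$) and $B$ in Pnn($G'$), statement (c) rewrites $G' \rar H'$ as the grounding of a rule $G \rar H$ of $F$; statements (a) and (b) then place $\pi(A)$ strictly positively in $H$ and $\pi(B)$ positively and nonnegated in $G$. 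As $A, B \in \bp^I$ their predicates lie in $\bp$, so this is precisely an edge of DG$_\bp[F]$.

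With the lemma established, fix an infinite walk $(A_1, A_2, \dots)$ of DG$_{\bp^I}[\text{gr}_I(F)]$; its image $(\pi(A_1), \pi(A_2), \dots)$ is an infinite walk in the finite graph DG$_\bp[F]$ (self-loops included). Let $V_\infty$ be the set of predicate symbols occurring infinitely often in this image, which is nonempty by pigeonhole. For any $p, q \in V_\infty$, walk segments between their occurrences give a directed path from $p$ to $q$ and one from $q$ to $p$, so $V_\infty$ lies inside a single strongly connected component $C$ of DG$_\bp[F]$. Separability of $\{\bp_1, \bp_2\}$ forces $C$ into $\bp_1$ or into $\bp_2$; say $C \subseteq \bp_1$. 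Then no predicate of $\bp_2$ occurs infinitely often, and since $\bp_2$ is finite, only finitely many $A_i$ satisfy $\pi(A_i) \in \bp_2$, i.e. $A_i \in \bp_2^I$. Thus the original walk visits $\bp_2^I$ finitely often, which is exactly infinite separability of the partition $\{\bp_1^I, \bp_2^I\}$ of $\bp^I$.

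I expect statement (b), and specifically the behavior of grounding on implications, to be the main obstacle. The delicate point is that gr$_I(G \rar H)$ can acquire the consequent $\bot$ for two different reasons---either because $H$ is $\bot$, or because $H$ is an equality that $I$ falsifies---and only the first corresponds to a genuine negation in the first-order formula. Fortunately, in every case where grounding collapses the consequent to $\bot$, the sets P and Pnn of the grounding become empty, so the required one-directional implications hold vacuously; checking that this collapse is the only new phenomenon grounding introduces, and that it is always harmless for the direction we need, is the crux of the argument.
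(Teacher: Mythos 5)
Your proof is correct, but it takes a genuinely different route from the paper's. The paper argues in two steps: first, that separability of $\{\bp_1,\bp_2\}$ on DG$_{\bp}[F]$ lifts to separability of $\{\bp_1^I,\bp_2^I\}$ on DG$_{\bp^I}[\hbox{gr}_I(F)]$, and second, that the latter graph has only finitely many strongly connected components, so that Proposition~2(ii) upgrades separability to infinite separability. You instead prove an edge-projection lemma (every edge of the ground graph maps under $\pi$ to an edge of the finite predicate graph) and then run the pigeonhole argument directly on the $\pi$-image of an arbitrary infinite walk, concluding that the infinitely recurring predicates lie in a single strongly connected component of DG$_{\bp}[F]$ and hence in one block of the partition. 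Your route is longer because it spells out the induction behind edge projection, which the paper leaves implicit, but it is also more robust: the paper's intermediate claim that DG$_{\bp^I}[\hbox{gr}_I(F)]$ has finitely many strongly connected components is in fact false in general --- the sentence $\forall x\,(p(s(x))\rar p(x))$, grounded over the integers with $s$ interpreted as successor, yields precisely the infinite path of Figure~1, which has infinitely many singleton components. Your argument sidesteps this by performing the finiteness reasoning at the predicate level, where it genuinely holds, so in effect you have supplied the detail needed to make the proposition's proof airtight. Your attention to the one delicate point --- that grounding can collapse a consequent to $\bot$ and thereby only shrink the sets P and Pnn, which is harmless for the one-directional implications you need --- is exactly right.
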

\begin{proof}
If $\{\bp_1, \bp_2\}$ is a partition of $\bp$ that is separable on 
DG$_\bp[F]$,  then for any interpretation $I$, the partition $\{{\bf p}_1^I, 
{\bf p}_2^I\}$ is separable on the atomic dependency graph of gr$_I(F)$ with 
respect to $ {\bf p}^I$. Furthermore, it is easy to see that 
DG$_{\bp^I}[{\text gr}_I(F)]$  
must have finitely many strongly connected components, so that      
$\{{\bf p}_1^I, {\bf p}_2^I\}$ must be infinitely separable on it.  
\end{proof}

\section{Proof of the Infinitary Splitting Lemma}

The following two lemmas can be easily proved by induction on the rank of $F$. 

\begin{lemma}\label{lem:botreduct}
If $I$ does not satisfy $F$ then 
the reduct~$F^I$ is equivalent to $\bot$. 
\end{lemma}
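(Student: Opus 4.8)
The plan is to argue by induction on the rank of $F$. The statement to establish is that whenever $I \not\models F$, the reduct $F^I$ is satisfied by no interpretation at all --- this is exactly what ``equivalent to $\bot$'' means, since $\bot = \emptyset^\lor$ is false in every interpretation.

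I would dispose of the base case and the implication case first, since both are immediate. For the base case $F$ is an atom $p$ with $p \notin I$, so $p^I$ is literally $\bot$ by the definition of the reduct. For the implication case, if $F$ is $G \rar H$ and $I \not\models G \rar H$, then the ``$\bot$ otherwise'' clause in the definition of the reduct makes $(G \rar H)^I$ equal to $\bot$ outright; no appeal to the induction hypothesis is needed.

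The two connective cases are where the induction hypothesis does the work. If $F$ is $\mathcal{H}^\land$ and $I \not\models \mathcal{H}^\land$, I would pick a member $G \in \mathcal{H}$ with $I \not\models G$; because $G$ has strictly smaller rank, the induction hypothesis yields that $G^I$ is equivalent to $\bot$, and then $(\mathcal{H}^\land)^I = \{G'^I : G' \in \mathcal{H}\}^\land$ is equivalent to $\bot$ because a conjunction containing a conjunct that is false in every interpretation is itself false in every interpretation. Dually, if $F$ is $\mathcal{H}^\lor$ and $I \not\models \mathcal{H}^\lor$, then $I \not\models G$ for every $G \in \mathcal{H}$, so by the induction hypothesis each $G^I$ is equivalent to $\bot$, whence the disjunction $(\mathcal{H}^\lor)^I$ is equivalent to $\bot$ as well.

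I do not expect any genuine obstacle: the argument rests only on the fact that the members of $\mathcal{H}$ have strictly smaller rank than $\mathcal{H}^\land$ and $\mathcal{H}^\lor$ (so that the induction hypothesis applies), together with the two elementary semantic facts that a conjunction with an everywhere-false conjunct is everywhere false and a disjunction of everywhere-false disjuncts is everywhere false. The empty cases require no separate treatment, since $\mathcal{H}^\land$ with $\mathcal{H} = \emptyset$ is $\top$ (so its hypothesis $I \not\models F$ can never hold) and $\mathcal{H}^\lor$ with $\mathcal{H} = \emptyset$ is already $\bot$.
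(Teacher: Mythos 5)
Your proof is correct and follows exactly the route the paper indicates (the paper only remarks that the lemma "can be easily proved by induction on the rank of $F$" and omits the details); your case analysis on atoms, implications, and infinite conjunctions/disjunctions fills in that induction faithfully, including the observation that the implication case needs no induction hypothesis.
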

\begin{lemma} \label{lem:nega} 
If the set $\ma$ is disjoint from P$(F)$ and $I$ satisfies $F$, then 
$I \setminus \ma$ satisfies~$F^I$. 
\end{lemma}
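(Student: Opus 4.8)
The plan is to argue by induction on the rank of $F$, writing $J$ for $I \setminus \ma$; the goal in each case is to show $J \models F^I$ under the assumptions $\ma \cap \text{P}(F) = \emptyset$ and $I \models F$. For the base case $F = p$ with $p \in \sigma$, the hypothesis that $\text{P}(p) = \{p\}$ is disjoint from $\ma$ forces $p \notin \ma$, so $p \in I$ yields $p \in J$; since $p^I$ is $p$, I get $J \models p^I$.

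The conjunctive and disjunctive cases are routine. Because P$(\mathcal{H}^\land)$ and P$(\mathcal{H}^\lor)$ both equal $\bigcup_{H \in \mathcal{H}}$ P$(H)$, the disjointness hypothesis descends to every member $H$ of $\mathcal{H}$, so the induction hypothesis is available for each. For $F = \mathcal{H}^\land$, satisfaction gives $I \models H$ for all $H \in \mathcal{H}$, hence $J \models H^I$ for all such $H$ by the induction hypothesis, and since $(\mathcal{H}^\land)^I = \{H^I : H \in \mathcal{H}\}^\land$ this means $J \models F^I$. For $F = \mathcal{H}^\lor$, satisfaction gives $I \models H$ for some $H$, the induction hypothesis yields $J \models H^I$ for that same $H$, and $(\mathcal{H}^\lor)^I = \{H^I : H \in \mathcal{H}\}^\lor$ is therefore satisfied by $J$.

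The implication case $F = G \rar H$ is where the content lies, and I expect it to be the main obstacle. The difficulty is an asymmetry: P$(G \rar H)$ equals P$(H)$, so the disjointness hypothesis controls the consequent but says nothing about the antecedent, and I cannot invoke the induction hypothesis on $G$. The resolution is to split on $I \models G \rar H$, which holds by assumption and gives either $I \models H$ or $I \not\models G$. In the first subcase, $\ma \cap \text{P}(H) = \emptyset$ together with $I \models H$ lets me apply the induction hypothesis to $H$, so $J \models H^I$ and hence $J \models G^I \rar H^I$. In the second subcase, Lemma~\ref{lem:botreduct} makes $G^I$ equivalent to $\bot$, so $J \not\models G^I$ and again $J \models G^I \rar H^I$. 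Since $I \models G \rar H$, the reduct $(G \rar H)^I$ is exactly $G^I \rar H^I$, so $J$ satisfies $F^I$ in both subcases, which closes the induction.
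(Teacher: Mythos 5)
Your proof is correct and follows essentially the same route as the paper's: induction on rank, with the disjointness hypothesis descending through conjunctions and disjunctions, and the implication case split into the subcase $I \not\models G$ (handled via Lemma~\ref{lem:botreduct}) and the subcase $I \models H$ (handled via the induction hypothesis applied to $H$, using P$(G \rar H) = \text{P}(H)$). No gaps.
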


In particular, if $I$ satisfies $F$ then $I$ satisfies $F^I$. (This is the 
direction left-to-right of Proposition 1 from \cite{tru12}.)

Lemmas \ref{lem:2parts}--\ref{lem:loops} are similar to Lemmas 3--5 from 
\cite{fer09a}. 

\begin{lemma}\label{lem:2parts}
For any disjoint sets of atoms $\mb_1, \mb_2$, interpretation $I$, and 
formula $F$, 
\begin{enumerate}
\item[(i)] If $\mb_2$ is disjoint from Pnn$(F)$  
and $I \setminus\!\mb_1$ satisfies~$F^I$ then 
$I\setminus\!(\mb_1\!\cup\mb_2)$ satisfies~$F^I$.  
\item[(ii)] If $\mb_2$ is disjoint from  Nnn$(F)$
and $I \setminus\!(\mb_1\!\cup \mb_2)$ satisfies 
$F^I$ then $I \setminus\!\mb_1$ satisfies~$F^I$.  
\end{enumerate}
\end{lemma}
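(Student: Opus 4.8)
The plan is to prove (i) and (ii) simultaneously by induction on the rank of $F$. The two parts must be established together: the definitions of Pnn and Nnn are mutually recursive through the implication clause, so the inductive argument for one part in the implication case will appeal to the inductive hypothesis for the other. Throughout I would abbreviate $J = I \setminus \mb_1$ and $K = I \setminus (\mb_1 \cup \mb_2)$, so that $K \subseteq J \subseteq I$ and the claims read: (i) if $\mb_2 \cap \text{Pnn}(F) = \emptyset$ and $J \models F^I$ then $K \models F^I$; (ii) if $\mb_2 \cap \text{Nnn}(F) = \emptyset$ and $K \models F^I$ then $J \models F^I$.

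For the base case $F = p$, if $p \notin I$ then $p^I = \bot$ and both implications hold vacuously; if $p \in I$ then $p^I = p$, and the claims reduce to the observations that $p \notin \mb_2$ with $p \in J$ gives $p \in K$ (for (i), where $\text{Pnn}(p) = \{p\}$), while $K \subseteq J$ gives the converse (for (ii), where $\text{Nnn}(p) = \emptyset$ imposes no constraint). The connective cases $\mathcal{H}^\land$ and $\mathcal{H}^\lor$ are routine: since Pnn and Nnn distribute as unions over $\mathcal{H}$, disjointness of $\mb_2$ from $\text{Pnn}(F)$ (resp. $\text{Nnn}(F)$) yields disjointness from each $\text{Pnn}(G)$ (resp. $\text{Nnn}(G)$), so I can apply the inductive hypothesis to every conjunct or disjunct $G$ and reassemble, using that satisfaction of $(\mathcal{H}^\land)^I$ is universal and of $(\mathcal{H}^\lor)^I$ existential over the member reducts.

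The implication case $F = G \rar H$ is where the real work lies. If $I \not\models G \rar H$ then $F^I = \bot$ and both implications are again vacuous, so I may assume $I \models G \rar H$ and hence $F^I = G^I \rar H^I$. For (i) I assume $\mb_2$ disjoint from $\text{Pnn}(F) = \text{Nnn}(G) \cup \text{Pnn}(H)$ and $J \models G^I \rar H^I$; if $J \models H^I$ I invoke the inductive hypothesis (i) for $H$ to get $K \models H^I$, and if instead $J \not\models G^I$ I argue contrapositively that $K \models G^I$ would force $J \models G^I$ by the inductive hypothesis (ii) for $G$, a contradiction, so $K \not\models G^I$. Either way $K \models F^I$. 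Part (ii) is symmetric, using the inductive hypothesis (ii) for $H$ and (i) for $G$ against the decomposition $\text{Nnn}(F) = \text{Pnn}(G) \cup \text{Nnn}(H)$. Since $G$ and $H$ both have smaller rank than $G \rar H$, these appeals are legitimate.

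I expect the main obstacle to be the degenerate subcase $H = \bot$, where the implication clauses make both $\text{Pnn}(F)$ and $\text{Nnn}(F)$ empty, so the disjointness hypotheses are vacuous and the cross-referenced argument above does not apply. Here $F = \neg G$, and the assumption $I \models G \rar H$ means $I \not\models G$; by Lemma~\ref{lem:botreduct} the reduct $G^I$ is then equivalent to $\bot$, so $F^I = \neg G^I$ is equivalent to $\top$. Since $\top$ is satisfied by every interpretation, both $J \models F^I$ and $K \models F^I$ hold unconditionally, and (i) and (ii) are immediate. Isolating this subcase and dispatching it via Lemma~\ref{lem:botreduct} is the one nonroutine move in the induction.
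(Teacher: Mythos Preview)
Your proposal is correct and follows essentially the same approach as the paper: a simultaneous induction on the rank of $F$, with the implication case cross-referencing the inductive hypotheses for (i) and (ii) via the decomposition $\text{Pnn}(G\rar H)=\text{Nnn}(G)\cup\text{Pnn}(H)$ (and dually). The only minor difference is in the degenerate subcase $H=\bot$: you dispatch it using Lemma~\ref{lem:botreduct} (since $I\not\models G$ forces $G^I$ equivalent to $\bot$, hence $F^I$ equivalent to $\top$), whereas the paper instead observes that $\text{P}(F)=\emptyset$ and appeals to Lemma~\ref{lem:nega} directly; both are equally valid one-line arguments.
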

\begin{proof} Both parts of the lemma are proved simultaneously by induction on 
the rank of $F$. Here, we show only the most interesting case when $F$ is of 
the form $G \rar H$. (i) If $I$ does not satisfy $F$ the reduct is 
equivalent to $\bot$ so that the proposition is trivially 
true.  Assume that $I \setminus \mb_1$ satisfies $G^I \rar H^I$ and that $\mb_2$ 
is disjoint from Pnn$(G \rar H)$. 
Then either~$H$ is~$\bot$ or~$\mb_2$ is disjoint from both Nnn$(G)$ and 
Pnn$(H)$.
If~$H$ is~$\bot$ then the set P$(F)$ is empty, so that $(\mb_1  \cup \mb_2)$ is 
disjoint from it.  Then by Lemma~\ref{lem:nega}, if 
$I$ satisfies $F$ then  $I \setminus 
(\mb_1 \cup \mb_2)$ satisfies $F^I$. If, on the other hand, $\mb_2$ is disjoint 
from both Nnn$(G)$ and Pnn$(H)$, then by part 
(i) of the induction hypothesis we may conclude that 
\beq
\text{if } I \setminus \mb_1 \text{ satisfies }
H^I \text{ then so does } I \setminus (\mb_1 \cup \mb_2),
\eeq{eq:iha} 
and by part (ii) of 
the induction hypothesis we may conclude that  
\beq
\text{if } I \setminus (\mb_1 \cup  \mb_2) \text{ satisfies }
G^I \text{ then so does } I \setminus \mb_1.
\eeq{eq:ihb} 
Assume that $I \setminus (\mb_1 \cup 
\mb_2)$ satisfies $G^I$. Then by \eqref{eq:ihb}, $I \setminus \mb_1$ satisfies 
$G^I$. Then, since $I \setminus \mb_1$ satisfies $G^I \rar H^I$,  that 
interpretation must satisfy $H^I$. Then 
by \eqref{eq:iha} we can conclude that $I \setminus  (\mb_1 \cup \mb_2)$ 
satisfies $H^I$. It follows that that $I \setminus (\mb_1 \cup \mb_2)$ satisfies 
$G^I \rar H^I$. (ii) Similar to Part (i).  
\end{proof}

\begin{lemma} \label{lem:noedge}
Let $\mb, \mc$ be disjoint sets of atoms and let $F$ be an infinitary formula 
such that there are no 
edges from $\mb$ to $\mc$ in DG$_{\mb\cup\mc}[F]$. If $I \setminus 
(\mb\cup\mc)$ satisfies $F^I$ then so does $I \setminus \mb$.  
\end{lemma}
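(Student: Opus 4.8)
The plan is to argue by induction on the rank of $F$, since the reduct and the sets P, Pnn, Nnn, and the rules of a formula are all defined recursively. I expect the base and connective cases to be routine, with the real content concentrated in the implication case, where the dependency-graph hypothesis is finally used.

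For the base case $F$ is an atom $p$; it has no rules, so the hypothesis on edges is vacuous, and $p^I$ (which is $p$ or $\bot$) is preserved in passing from the smaller interpretation $I\setminus(\mb\cup\mc)$ to the larger one $I\setminus\mb$. For $F=\mathcal{H}^\land$ or $F=\mathcal{H}^\lor$, the rules of $F$ are exactly the rules of the members of $\mathcal{H}$, so the absence of edges from $\mb$ to $\mc$ is inherited by every $H\in\mathcal{H}$; one applies the induction hypothesis to each such $H$ and reassembles, using that satisfaction of the reduct of a conjunction (resp.\ disjunction) reduces to satisfaction of the individual reducts $H^I$.

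The crux is $F=G\rar H$. If $I\not\models G\rar H$, then $F^I$ is equivalent to $\bot$ by Lemma~\ref{lem:botreduct}, so the premise $I\setminus(\mb\cup\mc)\models F^I$ fails and there is nothing to prove; so assume $I\models G\rar H$, whence $F^I=G^I\rar H^I$. To establish $I\setminus\mb\models G^I\rar H^I$, I would assume $I\setminus\mb\models G^I$ and aim for $I\setminus\mb\models H^I$. Here the edge hypothesis is applied to the rule $G\rar H$ itself: an edge from $\mb$ to $\mc$ would require an atom of $\mb$ in P$(H)$ together with an atom of $\mc$ in Pnn$(G)$, so its absence means that P$(H)$ is disjoint from $\mb$, or that Pnn$(G)$ is disjoint from $\mc$. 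These two disjuncts give the two subcases, and each matches exactly one of the earlier lemmas.

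In the first subcase (P$(H)$ disjoint from $\mb$): from $I\setminus\mb\models G^I$ and Lemma~\ref{lem:botreduct} we obtain $I\models G$, hence $I\models H$, and then Lemma~\ref{lem:nega} applied to $H$ yields $I\setminus\mb\models H^I$ directly. In the second subcase (Pnn$(G)$ disjoint from $\mc$): Lemma~\ref{lem:2parts}(i) converts $I\setminus\mb\models G^I$ into $I\setminus(\mb\cup\mc)\models G^I$, which together with the premise $I\setminus(\mb\cup\mc)\models G^I\rar H^I$ gives $I\setminus(\mb\cup\mc)\models H^I$; the induction hypothesis applied to $H$ (whose rules are among those of $F$, so it too has no edges from $\mb$ to $\mc$) then lifts this to $I\setminus\mb\models H^I$. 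The main obstacle is the bookkeeping in this implication case: one must check that the no-edge hypothesis splits precisely into the two disjuncts matching Lemmas~\ref{lem:nega} and~\ref{lem:2parts}(i), and that the rules, and hence the edges, of $H$ are inherited from $F$ so that the induction hypothesis is legitimately applicable to it.
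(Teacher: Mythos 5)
Your proposal is correct and follows essentially the same route as the paper's proof: induction on rank, with the implication case split according to whether $\mb$ meets P$(H)$ (handled via Lemma~\ref{lem:nega}) or, failing that, whether $\mc$ is forced to be disjoint from Pnn$(G)$ by the no-edge hypothesis (handled via Lemma~\ref{lem:2parts}(i) followed by the induction hypothesis on $H$). Your treatment of the first subcase is in fact slightly more explicit than the paper's, since you first derive $I\models G$ and hence $I\models H$ before invoking Lemma~\ref{lem:nega}, which is the hypothesis that lemma actually requires.
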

\begin{proof} The proof is by induction on the rank of $F$. Again we show 
only the most interesting case when $F$ is of the form $G~\rar~H$. Assume 
that $I \setminus (\mb \cup \mc)$ 
satisfies $(G~\rar~H)^I~=~G^I~\rar~H^I$. We need to show that $I \setminus 
\mb$ also satisfies $G^I \rar H^I$. If $\mb$ is disjoint from P$(H)$, then by 
Lemma~\ref{lem:nega}, $I \setminus \mb$ satisfies $H^I$, and therefore 
satisfies $G^I \rar H^I$. 
If, on the other hand, $\mb$ is not disjoint from P$(H)$ then $\mc$ must be 
disjoint from Pnn$(G)$, because there are no edges from $\mb$ to $\mc$ in
DG$_{\mb \cup \mc}[G \rar H]$.  
Then by Lemma~\ref{lem:2parts}(i), $I \setminus (\mb \cup \mc)$ satisfies 
$G^I$. Since we assumed that $I \setminus (\mb \cup \mc)$ satisfies 
$G^I \rar H^I$, it follows that $I \setminus (\mb \cup \mc)$ satisfies $H^I$. 
Since every edge in  DG$_{\mb \cup \mc}[H]$ occurs in  DG$_{\mb \cup 
\mc}[G \rar H]$ there is no edge from $\mb$ to $\mc$ in DG$_{\mb \cup \mc}[H]$. 
Then by the induction hypothesis, $I \setminus \mb$ satisfies~$H^I$ and 
therefore satisfies $G^I \rar H^I$.  
\end{proof}

\begin{lemma}\label{lem:loops}
For any non-empty graph $G$ and any infinitely separable partition $\{\ma_1, \ma_2\}$ on $G$, 
there exists a non-empty subset $\mb$ of the vertices in $G$
such that
\begin{enumerate}
\item[(i)] $\mb$ is either a subset of $\ma_1$ or a subset of $\ma_2$, and
\item[(ii)] there are no edges from $\mb$ to vertices not in $\mb$. 
\end{enumerate}
\end{lemma}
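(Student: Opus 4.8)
The plan is to recast condition (ii) in its natural form: a set $\mb$ has ``no edges to vertices outside $\mb$'' exactly when it is closed under taking successors. So the task is to produce a non-empty successor-closed set that lies entirely inside one block of the partition. The obvious attempt — let $\mb$ be a terminal strongly connected component — is exactly what fails here, since $G$ may be infinite and need not have any terminal component (the two-way infinite path of Figure~\ref{fig:one} is the cautionary example). I would therefore build the closed set from reachability rather than from the condensation, and combine this with the infinite-separability hypothesis through a contradiction argument.

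Concretely, I would define $C_1$ to be the set of vertices $v\in\ma_1$ from which no vertex of $\ma_2$ is reachable by a directed walk, and define $C_2$ symmetrically with the roles of $\ma_1$ and $\ma_2$ interchanged. The first step is to verify that $C_1\subseteq\ma_1$ is closed: if $v\in C_1$ and $v\to w$ is an edge, then $w$ is reachable from $v$, so $w\notin\ma_2$ and hence $w\in\ma_1$; moreover every vertex reachable from $w$ is reachable from $v$, so no vertex of $\ma_2$ is reachable from $w$ either, giving $w\in C_1$. The same reasoning shows $C_2\subseteq\ma_2$ is closed. Consequently, if either $C_1$ or $C_2$ is non-empty it already serves as the required $\mb$, and the lemma holds.

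It remains to rule out the case $C_1=C_2=\emptyset$, and this is the only place where infinite separability is used. Emptiness of both sets says precisely that from every vertex of $\ma_1$ some vertex of $\ma_2$ is reachable and from every vertex of $\ma_2$ some vertex of $\ma_1$ is reachable. Starting from any vertex of the non-empty graph, I would build an infinite walk in phases: whenever the current endpoint lies in one block, I extend the walk by a finite path to a vertex of the other block. Each phase contributes at least one edge, since the current vertex lies in the block opposite to the target, so the resulting walk is genuinely infinite and its phase-endpoints alternate between the two blocks; hence it visits each of $\ma_1$ and $\ma_2$ infinitely often, contradicting infinite separability.

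The main obstacle, then, is not any single calculation but the infinitude of $G$: once reachability replaces terminal components, the closedness check for $C_1$ and $C_2$ is routine and the walk construction needs only the observation that every phase is a non-trivial path. I would be careful to read ``reachable'' as reachable by a walk of length at least one, so that an $\ma_1$-vertex is never counted as reaching $\ma_2$ vacuously through itself, and to confirm that the alternating walk indeed makes both $\{i:v_i\in\ma_1\}$ and $\{i:v_i\in\ma_2\}$ infinite, which is exactly the negation of infinite separability as defined in the excerpt.
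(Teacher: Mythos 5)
Your proof is correct and takes essentially the same route as the paper's: the paper picks a single vertex $b$ whose forward-reachable set lies entirely in one block and lets $\mb$ be that reachable set, justifying the existence of $b$ by exactly your alternating-walk contradiction, while your $C_1$ and $C_2$ simply collect all such witness vertices at once. The additional care you take with closedness under successors and with phases of length at least one fills in details the paper dismisses as ``easy to see,'' but the underlying argument is the same.
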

\begin{proof}
Since $\{\ma_1, \ma_2\}$ is infinitely separable on $G$, there is some vertex~$b$
 such that the set of vertices reachable from~$b$ is either a subset of $\ma_1$ 
or a subset of $\ma_2$.
(If no such $b$ existed then $\ma_1$ would be reachable from every vertex in 
$\ma_2$ and vice versa, and we could construct an infinite walk visiting 
both elements of the partition infinitely many times.)
It is easy to see that the set of all 
vertices reachable from $b$ satisfies both (i) and (ii).  
\end{proof}
\begin{proof}[Proof of the Infinitary Splitting Lemma]
Let $F$ be an 
infinitary formula such that the partition $\{\ma_1, \ma_2\}$ is infinitely separable on 
DG$_{\ma_1 \cup \ma_2}[F]$. We 
need to show that $I$ is an $\ma_1\!\cup\!\ma_2$-stable model of~$F$ iff it is 
an $\ma_1$-stable model and an $\ma_2$-stable model of $F$. The direction 
left-to-right is obvious. To establish the 
direction right-to-left, assume that $I$ is both an $\ma_1$-stable model and an 
$\ma_2$-stable model of $F$. 
By Proposition \ref{thm:astable} it is sufficient to show that $I$ is a 
minimal model of
\beq
F^I \land \bigwedge_{p \in I \setminus (\ma_1 \cup \ma_2)} p.
\eeq{eq:cupreduct2} 
Clearly, $I$ satisfies this formula. It remains to show
that $I$ is minimal.  Assume there is some non-empty subset~$X$ of~$I$ such 
that~$I \setminus X$ satisfies \eqref{eq:cupreduct2}. Then $I \setminus X$ 
satisfies the second conjunctive term of~\eqref{eq:cupreduct2}, so   
$ I \setminus (\ma_1 \cup \ma_2) \subseteq I \setminus X$. Consequently, $X 
\subseteq \ma_1 \cup \ma_2$.  
Consider the sets $X \cap \ma_1$ and 
$X \cap \ma_2$. 
Since $\ma_1$ and $\ma_2$ are infinitely separable on DG$_{\ma_1 \cup 
\ma_2}[F]$, the sets 
$X \cap \ma_1$ and  $X \cap \ma_2$ must be infinitely separable on DG$_X[F]$.  
Then by Lemma~\ref{lem:loops}, there is some non-empty set~$\mb$ that is 
either a subset of 
$X \cap \ma_1$ or a subset of $X \cap \ma_2$ and such that there are no edges 
from $\mb$ to $X \setminus \mb$. 
We will show that $I \setminus \mb$ satisfies
\beq
F^I \land \bigwedge_{p \in I \setminus \ma_1} p,
\eeq{eq:reda1}
which contradicts the assumption that $I$ is an $\ma_1$-stable model of $F$.
Since~$I~\setminus~X$ satisfies the first 
conjunctive term of \eqref{eq:reda1}, 
by~Lemma~\ref{lem:noedge} so does $I \setminus \mathcal{B}$. Assume, for 
instance, that 
$\mb$ is a subset of~$X \cap \ma_1$. Then~$\mb$ is a subset of $\ma_1$, so that 
$I \setminus \ma_1$ is a subset of $I \setminus \mb$. We may conclude that $I 
\setminus B$ satisfies the second conjunction term of \eqref{eq:reda1} as well.
\end{proof} 

\section{Infinitary Splitting Theorem}

The infinitary splitting lemma can be used to prove the following 
theorem, which is similar to the splitting theorem from \cite{fer09a}.  

\medskip

\noindent {\it Infinitary Splitting Theorem}  
\newline \noindent Let $F,G$ be infinitary formulas, and 
$\ma_1,\ma_2$ be disjoint 
sets of atoms such that 
the partition $\{\ma_1, \ma_2\}$ is infinitely separable on 
DG$_{\ma_1\!\cup\!\ma_2}[F \land G]$. If 
$\ma_2$ is disjoint from P$(F)$, and 
$\ma_1$ is disjoint from P$(G)$,
then for any interpretation $I$, $I$ is an $\ma_1\!\cup\!\ma_2$-stable model of 
$F\land G$ iff it is both an $\ma_1$-stable model of $F$ and an $\ma_2$-stable 
model of $G$.

\medskip

\noindent{\it Example (continued)}
\newline \noindent
Consider the conjunction of \eqref{eq:r1if} with the formula $\mp^\land$ 
where $\mp$ is as before some non-empty set of atoms of the form 
$p(t)$. We saw previously that 
$\{q\}$ and all non-empty sets of atoms of the form $p(t)$ are 
$\{q\}$-stable models of \eqref{eq:r1if}. 
It is easy to check that $\sigma\!\setminus\!\{q\}$-stable models of 
$\mp^\land$ are $\mp$ and $\mp \cup \{q\}$. In accordance with the 
splitting theorem, $\mp$ is the only stable model of this formula. 

\medskip

The following lemma, analogous to Theorem 3 from \cite{fer09}, is used to prove 
the infinitary splitting theorem. 

\begin{lemma}\label{lem:stabsat}
For any infinitary formulas $F,G$, if $\ma$ is disjoint from P$(G)$ then $I$ 
is an $\ma$-stable model of $F \land G$ iff it is an $\ma$-stable model of $F$ 
and satisfies $G$.  
\end{lemma}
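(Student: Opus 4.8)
The plan is to reduce the whole statement to the characterization of $\ma$-stability furnished by Proposition~\ref{thm:astable}(ii). By that proposition, $I$ is an $\ma$-stable model of $F\land G$ iff it is a minimal model (w.r.t.\ set inclusion) of $(F\land G)^I\land\bigwedge_{p\in I\setminus\ma}p$, and $I$ is an $\ma$-stable model of $F$ iff it is a minimal model of $F^I\land\bigwedge_{p\in I\setminus\ma}p$. Since the reduct of a conjunction is the conjunction of the reducts, $(F\land G)^I$ is $F^I\land G^I$, so the first formula is $F^I\land G^I\land\bigwedge_{p\in I\setminus\ma}p$. I will also freely pass between $I\models G$ and $I\models G^I$, using both directions of Truszczynski's Proposition~1 (its left-to-right direction is recorded right after Lemma~\ref{lem:nega}). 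With this setup the lemma becomes a comparison between the minimal models of these two reduct-conjunctions, and I will treat the two implications separately since they are of very different difficulty.

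The right-to-left direction is the easy one and uses no hypothesis on P$(G)$. Assuming $I$ is a minimal model of $F^I\land\bigwedge_{p\in I\setminus\ma}p$ and $I\models G$, I first note $I\models G^I$, so $I$ satisfies $F^I\land G^I\land\bigwedge_{p\in I\setminus\ma}p$. Minimality is then immediate: every model of $F^I\land G^I\land\bigwedge_{p\in I\setminus\ma}p$ is a fortiori a model of the weaker formula $F^I\land\bigwedge_{p\in I\setminus\ma}p$, so a proper subset of $I$ satisfying the former would contradict minimality of $I$ for the latter. Hence $I$ is a minimal model of $(F\land G)^I\land\bigwedge_{p\in I\setminus\ma}p$, i.e.\ an $\ma$-stable model of $F\land G$.

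The left-to-right direction is where disjointness of $\ma$ from P$(G)$ does the work, and the key tool will be Lemma~\ref{lem:nega}. Assume $I$ is a minimal model of $F^I\land G^I\land\bigwedge_{p\in I\setminus\ma}p$. Satisfaction gives $I\models G^I$, hence $I\models G$, which is one of the two desired conclusions. For the other, I must show $I$ is minimal for $F^I\land\bigwedge_{p\in I\setminus\ma}p$; so suppose for contradiction that some proper subset $J\subset I$ satisfies this formula. From $J\models\bigwedge_{p\in I\setminus\ma}p$ I get $I\setminus\ma\subseteq J$, hence $I\setminus J\subseteq\ma$, and therefore $I\setminus J$ is disjoint from P$(G)$. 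The crucial observation is that, although Lemma~\ref{lem:nega} is phrased as removing the whole set $\ma$, its hypothesis only requires the removed set to be disjoint from P$(G)$, so it applies verbatim to the arbitrary set $I\setminus J$: since $I\models G$ and $(I\setminus J)\cap{}$P$(G)=\emptyset$, the lemma yields that $I\setminus(I\setminus J)=J$ satisfies $G^I$. Then $J$ satisfies $F^I\land G^I\land\bigwedge_{p\in I\setminus\ma}p$ while being a proper subset of $I$, contradicting minimality of $I$ for $F\land G$. This contradiction establishes minimality for $F$ and completes the argument.

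I expect the main obstacle to be precisely the step just highlighted, namely recognizing that Lemma~\ref{lem:nega} can be invoked with the \emph{ad hoc} set $I\setminus J$ rather than with $\ma$ itself; once one sees that the lemma's only genuine requirement is disjointness from P$(G)$, the whole proof reduces to bookkeeping about the auxiliary conjuncts $\bigwedge_{p\in I\setminus\ma}p$. The remaining routine verifications, that $(F\land G)^I$ equals $F^I\land G^I$ and that $I\models G$ is interchangeable with $I\models G^I$, I would dispatch by direct appeal to the definition of the reduct and to Truszczynski's Proposition~1.
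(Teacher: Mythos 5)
Your proof is correct and follows essentially the same route as the paper's: the easy direction is the same observation that adding the conjunct $G^I$ only shrinks the set of models, and in the hard direction the paper likewise sets $\ma' = I\setminus J$, notes $\ma'\subseteq\ma$ is disjoint from P$(G)$, and applies Lemma~\ref{lem:nega} to that ad hoc set to conclude $J = I\setminus\ma'$ satisfies $G^I$, yielding the same contradiction. The only cosmetic difference is that you phrase minimality via Proposition~\ref{thm:astable}(ii) throughout, while the paper works directly with $\leq_\ma$-minimality of the reduct.
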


\begin{proof} $\Leftarrow$: Assume $I$ is an $\ma$-stable model of $F$ and $I$ 
satisfies $G$. 
Since $I$ satisfies $G$ it satisfies~$G^I$. 
Since $I$ is an $\ma$-stable model of $F$, it is a minimal 
w.r.t. $\leq_{\ma}$ among the models of $F$, and consequently among 
the models of $F \land G$. 

\medskip
\noindent $\Rightarrow$: Assume $I$ is an $\ma$-stable model of $F \land G$. 
Then $I$ is a minimal model of~$(F \land G)^I$ w.r.t. $\leq_\ma$. 
So $I$ satisfies $F \land G$ and therefore satisfies $G$. It remains to show 
that there is no proper subset $J$ of $I$ such that $I \setminus J \subseteq 
\ma$ and $J$ satisfies $F^I$. Assume that there is some such $J$. Then $J$ 
must not satisfy $G^I$. (If it did, then~$I$ would not be minimal with respect 
to $\leq_\ma$ among the models of $(F \land G)^I$.) Let $\ma'$ denote 
$I \setminus J$.  Since $\ma$ is disjoint from P$(G)$, so is $\ma'$. 
So by Lemma \ref{lem:nega}, $I \setminus \ma' = J$ must satisfy $G^I$. 
Contradiction.  
\end{proof}

\begin{proof}[Proof of the Infinitary Splitting Theorem]
Let $F,G$ be infinitary formulas and let $\ma_1, \ma_2$ be disjoint sets of 
atoms such that the partition $\{\ma_1, \ma_2\}$ is infinitely 
separable on DG$_{\ma_1 \cup \ma_2}[F \land G]$ and the other conditions of the infinitary splitting theorem hold. 
By the infinitary splitting lemma, $I$ is an 
$\ma_1\!\cup\!\ma_2$-stable model of $F 
\land G$ iff it is both an $\ma_1$-stable model and an $\ma_2$-stable model of 
$F \land G$. Since~$\ma_2$ is disjoint from P$(F)$, by Lemma \ref{lem:stabsat}, 
$I$ is an $\ma_2$-stable model of $F \land G$ iff it is an 
$\ma_2$-stable model of $G$ 
and it satisfies $F$. Similarly, $I$ is 
an $\ma_1$-stable model of $F \land G$ iff it is an $\ma_1$-stable model of $F$ 
and it satisfies $G$. It remains to observe that if $I$ is an~$\ma_2$-stable 
model of $F$ then it satisfies $F$, and similarly if $I$ is an $\ma_1$-stable 
model of~$G$.  
\end{proof}


\section{Application: Infinitary Definitions}
About a formula $G$ and a set $\mq$ of atoms we will say that $G$ is a 
{\it definition for $\mq$} if it is a conjunction of a set of 
formulas of the form $H \land \mc^{\land} \rar q$, where $q$ is an atom
in $\mq$, $C$ is a  subset of $\mq$ (possibly empty), and no atoms 
from $\mq$ occur in $H$.\footnote{The relation 
$p$ {\it occurs in} $F$ is defined recursively in a straightforward 
way.}

A simple special case is ``explicit definitions'': conjunctions of formulas
$H \rar q$ such that atoms from  $\mq$ don't occur in any $H$. For example, 
\eqref{eq:r1if} is an explicit definition of $\{q\}$. The conjunction
of the formulas
$$
p_{\alpha\beta} \rar q_{\alpha\beta}\quad\hbox{and}\quad
q_{\alpha\beta} \land q_{\beta\gamma} \rar q_{\alpha\gamma}
$$
for all $\alpha,\beta,\gamma$ from some set of indices, which represents the 
usual recursive definition of transitive closure, is a definition in our sense
as well.  On the other hand, the formula $\neg q \rar q$ is not a definition.

The following theorem shows that all definitions are ``conservative''.  

\medskip
\noindent{\it Theorem on Infinitary Definitions}
\newline \noindent For any infinitary formula~$F$, any set~$\mq$ of 
atoms that do not occur in $F$, and any definition~$G$ for~$\mq$, the map
$I \mapsto I \setminus \mq$ is a 1-1 correspondence between the stable models
of $F \land G$ and the stable models of $F$. 

\medskip

This theorem generalizes the lemma on explicit definitions due to Ferraris
(\citeyear{fer05}) in two ways: it applies to infinitary formulas, and it
allows definitions to be recursive.

%

\begin{lemma}\label{lem:defproof}
If all atoms that occur in $F$ belong to $\ma$ then, for any interpretation~$I$,
$I$ is an $\ma$-stable model of $F$ iff $I \cap \ma$ is a stable model of $F$. 
\end{lemma}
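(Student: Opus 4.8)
The plan is to reduce the $\ma$-stability of $I$ to the ordinary stability of $K := I \cap \ma$, exploiting the fact that under the hypothesis the reduct of $F$ ``lives entirely on the intensional atoms,'' so that the only role of the extensional atoms $I \setminus \ma$ is to be frozen. The first step I would carry out is an auxiliary claim, proved by straightforward induction on the rank of $F$: if every atom occurring in $F$ belongs to $\ma$, then $F^I = F^{I \cap \ma}$ as formulas. The base case is immediate since $p \in I$ iff $p \in I \cap \ma$ for $p \in \ma$; the conjunction and disjunction cases are routine; and in the implication case one uses that $I$ and $I \cap \ma$ agree on all atoms of $G \rar H$ (these lie in $\ma$), so $I \models G \rar H$ iff $I \cap \ma \models G \rar H$, after which the induction hypothesis finishes. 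Two consequences I will rely on: every atom occurring in $F^I$ lies in $K$ (atoms of $\ma \setminus I$ are turned into $\bot$ by the reduct, atoms outside $\ma$ never occur in $F$), and satisfaction of $F^I$ depends only on the intersection of an interpretation with $\ma$.

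Next I would invoke Proposition~\ref{thm:astable}, which tells us that $I$ is an $\ma$-stable model of $F$ iff $I$ is a minimal model, with respect to set inclusion, of $\Phi := F^I \land \bigwedge_{p \in I \setminus \ma} p$. This is the key move, because it converts the two-sorted minimality $\leq_\ma$ into ordinary $\subseteq$-minimality of an augmented formula whose extra unit conjuncts pin down exactly the extensional atoms of $I$. Since $F^I = F^K$ mentions only atoms of $K \subseteq \ma$, every model $N$ of $\Phi$ must contain $I \setminus \ma$, and $N \models F^I$ holds iff $N \cap K \models F^K$. From this I conclude that the $\subseteq$-minimal models of $\Phi$ are precisely the sets $M \cup (I \setminus \ma)$ where $M$ is a $\subseteq$-minimal model of $F^K$ (any such $M$ is automatically a subset of $K$, and is therefore disjoint from $I \setminus \ma$). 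Applying this to $I = K \cup (I \setminus \ma)$ itself, I obtain that $I$ is a minimal model of $\Phi$ iff $K$ is a minimal model of $F^K$, which is exactly the statement that $K = I \cap \ma$ is a stable model of $F$. Both directions of the lemma then follow at once.

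The step I expect to be the main obstacle is the clean decoupling in the previous paragraph: verifying that the minimization of the intensional part and the frozen extensional conjuncts do not interfere. Concretely, one must check that no $\subseteq$-minimal model of $\Phi$ carries an atom outside $K \cup (I \setminus \ma)$, and that shrinking the intensional part of a model can never force the removal of one of the frozen atoms $I \setminus \ma$. Both facts hinge precisely on the auxiliary claim that $F^I$ contains no atom outside $\ma$ (indeed outside $K$): the extensional atoms appear in $\Phi$ only as the isolated unit conjuncts $\bigwedge_{p \in I \setminus \ma} p$, so they neither participate in the satisfaction of $F^I$ nor are ever implied by its intensional atoms. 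Once this separation is established, the order isomorphism $M \mapsto M \cup (I \setminus \ma)$ between $\subseteq$-minimal models of $F^K$ and of $\Phi$ is immediate, and the proof concludes.
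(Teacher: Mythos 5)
Your proposal is correct and follows essentially the same route as the paper: the paper's (very terse) proof consists precisely of the observation that, under the hypothesis, $F^{I\cap\ma}\land\bigwedge_{p\in I\setminus(I\cap\ma)}p$ is identical to the formula \eqref{eq:modred} of Proposition~\ref{thm:astable}, leaving the remaining bookkeeping implicit. Your inductive claim $F^I=F^{I\cap\ma}$ and the decoupling of the frozen conjuncts $\bigwedge_{p\in I\setminus\ma}p$ from the minimization over $F^{I\cap\ma}$ are exactly the details the paper suppresses.
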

\begin{proof}
If all atoms that occur in $F$ belong to $\ma$ then
$$F^{I \cap \ma} \land \bigwedge_{p \in I \setminus (I\cap\ma)} p$$
is identical to~\eqref{eq:modred}.
\end{proof}
\begin{lemma}\label{lem:qdefhelp}
Let $G$ be a definition for a set $\mq$ of atoms, and let $I$ be a model
of $G$. For any 
subset $K$ of $I$ such that $K \setminus \mq = I \setminus \mq$, $K$ satisfies 
$G^I$ iff $K$ satisfies $G$.
\end{lemma}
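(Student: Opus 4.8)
The plan is to reduce the statement to a conjunct-by-conjunct comparison and then to analyze the reduct of each conjunct separately on its $\mq$-part and its non-$\mq$-part. I would write $G$ as $\mathcal{S}^\land$, where every element of $\mathcal{S}$ has the form $H \land \mc^\land \rar q$ with $q \in \mq$, $\mc \subseteq \mq$, and no atom of $\mq$ occurring in $H$. Since the reduct of a conjunction is the conjunction of the reducts of its members, $K \models G^I$ iff $K \models F^I$ for every $F \in \mathcal{S}$, and likewise $K \models G$ iff $K \models F$ for every such $F$. Hence it is enough to fix one conjunct $F = H \land \mc^\land \rar q$ and prove that $K \models F^I$ iff $K \models F$.

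Because $I$ is a model of $G$, it satisfies $F$, so the reduct $F^I$ is not $\bot$ but equals $(H^I \land (\mc^\land)^I) \rar q^I$, where $(\mc^\land)^I$ is $\bigwedge_{c \in \mc} c^I$. The argument then rests on three equivalences, each asserting that $K$ satisfies a component of the reduct iff it satisfies the corresponding component of $F$. For the $\mq$-atoms, namely each $c \in \mc$ and the head $q$, I would use the hypothesis $K \subseteq I$: if such an atom $p$ lies in $I$ then $p^I$ is $p$ and the two conditions coincide, while if $p \notin I$ then $p \notin K$, so $p^I$ is $\bot$ and both $K \models p^I$ and $K \models p$ fail. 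Thus $K \models c^I$ iff $K \models c$ for each $c \in \mc$, and $K \models q^I$ iff $K \models q$.

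For the body $H$, whose atoms all lie outside $\mq$, I would use instead the hypothesis $K \setminus \mq = I \setminus \mq$, which guarantees that $K$ and $I$ agree on every atom of $H$. A short induction on the rank of $H$ then shows that $H^I$ and $H^K$ are the identical formula: atoms reduce in the same way because $I$ and $K$ contain the same non-$\mq$ atoms, and a subimplication of $H$ is satisfied by $I$ iff it is satisfied by $K$, so its reduct collapses to $\bot$ for $I$ exactly when it does for $K$. Combining this with the observation that $K \models H$ iff $K \models H^K$ \cite[Proposition~1]{tru12} yields $K \models H^I$ iff $K \models H$. With the three equivalences in hand, $K$ satisfies the implication $F^I$ iff it satisfies $F$, and the conjunct-by-conjunct reduction finishes the proof.

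The main obstacle is the treatment of $H$. One must be careful that the reduct of an implication depends on whether the whole interpretation satisfies it, so the tempting shortcut ``agreement on atoms preserves the reduct'' cannot simply be asserted but has to be justified by the induction above, where the collapse to $\bot$ is tracked explicitly. Everything else is a direct case analysis on membership in $I$, using $K \subseteq I$ to rule out the atoms that $I$ falsifies.
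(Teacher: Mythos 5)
Your proof is correct and follows essentially the same route as the paper's: both arguments reduce to a single conjunct $H \land \mc^\land \rar q$, handle the $\mq$-atoms ($q$ and the members of $\mc$) via $K \subseteq I$, and handle $H$ via the agreement of $K$ and $I$ on non-$\mq$ atoms together with Proposition~1 of \cite{tru12}. The only cosmetic difference is that you establish $H^I = H^K$ syntactically and then apply that proposition to $K$, whereas the paper passes through the chain $K \models H^I$ iff $I \models H^I$ iff $I \models H$ iff $K \models H$; both are sound.
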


\noindent {\it Proof.}
\newline \noindent We can show that~$K$ satisfies a conjunctive term
$H \land \mc^{\land} \rar q$ of $G$ iff $K$ satisfies its reduct
$H^I \land (\mc^{\land})^I \rar q^I$ as follows:
\begin{align*}
&K \not \models H^I \land (C^{\land})^I \rar q^I\\
\text{iff } &K \models H^I,\ K\models (C^{\land})^I, \text{ and } K \not \models q^I\\ 
\text{iff } &K \models H^I,\ K\models (C^{\land})^I, \text{ and } q \not \in K
&&(\text{because }K\subseteq I)\\
\text{iff } &I \models H^I,\ K\models (C^{\land})^I, \text{ and } q \not \in K
&&(\text{$K$ and $I$ agree on atoms occurring in~$H$})\\
\text{iff } &I \models H,\ K\models (C^{\land})^I, \text{ and } q \not \in K\\
\text{iff } &K \models H,\ K\models (C^{\land})^I, \text{ and } q \not \in K&&
(\text{$K$ and $I$ agree on atoms occurring in~$H$})\\
\text{iff } &K \models H,\ C\subseteq K, \text{ and } q \not \in K
&&(K\subseteq I)\\ 
\text{iff } &K \not \models H \land C^{\land} \rar q. \qquad \Box 
\end{align*}

\begin{lemma}\label{lem:qdef}
Let $G$ be a definition for a set $\mq$ of atoms. For any set $J$ of atoms
disjoint from  $\mq$ there exists a unique $\mq$-stable model $I$ of $G$ such
that $I \setminus \mq = J$.  
\end{lemma}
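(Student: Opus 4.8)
The plan is to fix $J$ and reduce $\mq$-stability to the existence of a least element in a family of subsets of $\mq$ that is closed under intersection. Since $J$ is disjoint from $\mq$, every interpretation with extensional part $J$ can be written uniquely as $J \cup S$ with $S = (J\cup S)\cap\mq \subseteq \mq$, and then $(J\cup S)\setminus\mq = J$. Let $\mathcal{M}$ be the family of all $S \subseteq \mq$ such that $J \cup S \models G$.

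First I would obtain a clean characterization of the $\mq$-stable models $I$ of $G$ with $I\setminus\mq = J$. By Proposition~\ref{thm:astable}, such an $I = J\cup S$ is $\mq$-stable iff it is a minimal model, w.r.t. set inclusion, of $G^I \land \bigwedge_{p\in J} p$. The top condition here is $I \models G^I$, which (since $I$ satisfies $G$ iff it satisfies $G^I$) is the same as $S \in \mathcal{M}$. For the minimality test, any model $K$ of this formula satisfies $J \subseteq K$, and any such $K$ with $K \subseteq I$ has $K\setminus\mq = I\setminus\mq = J$; hence, by Lemma~\ref{lem:qdefhelp} (applicable because $I \models G$), for these $K$ we have $K \models G^I$ iff $K \models G$. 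Combining these observations, $J \cup S$ is a $\mq$-stable model of $G$ iff $S \in \mathcal{M}$ and no proper subset $S' \subsetneq S$ belongs to $\mathcal{M}$, i.e. iff $S$ is a minimal element of $\mathcal{M}$.

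Second I would show that $\mathcal{M}$ has a unique minimal element. It is nonempty, since $\mq \in \mathcal{M}$: every conjunctive term $H \land C^{\land} \rar q$ of $G$ has its head $q$ in $\mq$, so $J \cup \mq$ satisfies it. It is closed under arbitrary intersection: if every $S_i \in \mathcal{M}$ and $S = \bigcap_i S_i$, then for a term $H \land C^{\land} \rar q$ with $J \models H$ and $C \subseteq S$ we have $C \subseteq S_i$ for all $i$, hence $q \in S_i$ for all $i$, hence $q \in S$; so $J \cup S \models G$ and $S \in \mathcal{M}$. Therefore $S^{*} := \bigcap_{S\in\mathcal{M}} S$ is the least, and hence the unique minimal, element of $\mathcal{M}$. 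Putting the two steps together, $I := J \cup S^{*}$ is a $\mq$-stable model of $G$ with $I\setminus\mq = J$, and it is the only one.

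The point requiring the most care is the first step: I must verify that every competitor $K \subsetneq I$ arising in the minimality test of Proposition~\ref{thm:astable} automatically agrees with $I$ on the extensional atoms, so that Lemma~\ref{lem:qdefhelp} can replace satisfaction of the reduct $G^I$ by satisfaction of $G$ itself. Once $\mq$-stability has been recast purely in terms of models of $G$, the remaining argument is the standard least-element observation, and it goes through unchanged for infinite $\mq$ and infinite bodies $C$, because closure under intersection — unlike a least fixpoint built by iteration — requires no cardinality or continuity assumption.
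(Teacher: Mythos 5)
Your proof is correct and follows essentially the same route as the paper's: your $J \cup S^{*}$ is exactly the paper's candidate $I$, namely the intersection of all models $K$ of $G$ with $K \setminus \mq = J$, and both arguments rest on the same two facts --- that this family is closed under intersection because every conjunctive term of a definition has its head in $\mq$ and its body depends on $\mq$ only through the positive conjunction $\mc^{\land}$, and that Lemma~\ref{lem:qdefhelp} lets one trade satisfaction of the reduct for satisfaction of $G$ itself when comparing $I$ with competitors that agree with it outside $\mq$. Your recasting of $\mq$-stability (via Proposition~\ref{thm:astable}) as minimality of $S$ in the intersection-closed family $\mathcal{M}$ is only a repackaging of the paper's separate existence and uniqueness steps.
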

\begin{proof}
Let $I$ be the intersection of all models $K$ of $G$ such that $K\setminus\mq = J$.
We will show first that $I$ satisfies~$G$.  Assume otherwise, and take
a conjunctive term $H \land C^{\land} \rar q$ of~$G$ that is not satisfied by $I$. Then 
$I$ satisfies~$H$, $C\subseteq I$, and $q\not\in I$.  By the choice of~$I$, it
follows that there is a model~$K$ of~$G$ such that $K \setminus Q = J$ and
$q\not\in K$.  On the other hand, since~$I$ satisfies~$H$ and does not differ from~$K$
on atoms occurring in~$H$,~$K$ satisfies~$H$.  Since $C\subseteq I\subseteq K$,
$K$ satisfies $C^{\land}$.  Hence $K$ does not satisfy one of the conjunctive terms
of~$G$, which is a contradiction.
Thus~$I$ is a model of~$G$, and consequently a model of~$G^I$.  To prove that it is
$\mq$-stable, consider any model~$K$ of~$G^I$ such that $K\leq_\mq I$.  By
Lemma \ref{lem:qdefhelp}, $K$ is also a model $G$.  By the choice of~$I$, it follows
that $I\subseteq K$. Consequently $K=I$.

It remains to show that $I$ is unique. Let $K$ be a $\mq$-stable model of $G$ 
such that $K \setminus Q = J$. 
It is easy to see 
that $I \subseteq K$. Furthermore, $K$ satisfies $G^K$ and $I$ satisfies $G$, so 
by Lemma \ref{lem:qdefhelp}, $I$ satisfies $G^K$. Since $I \leq_\mq K$, it 
follows that $I = K$.   
\end{proof}

\begin{proof}[Proof of Theorem on Infinitary Definitions]
Let $\sigma$ denote the set of all atoms occurring in $F \land G$. Since
atoms from $\mq$ do not occur in $F$ and P$(G)\subseteq\mq$, there are no
edges from $\sigma\!\setminus\!\mq$ to $\mq$ in DG$_ {\sigma}[F \land G].$ 
Consequently the partition $\{\sigma\!\setminus\!\mq, \mq\}$ is infinitely 
separable on this graph. 
By the splitting theorem for infinitary formulas, an interpretation $I$ is a 
stable model of $F \land G$ iff it is a
$(\sigma\!\setminus\!\mq)$-stable model of $F$ and a $\mq$-stable model of~$G$.
Consider a stable model~$I$ of $F \land G$.  
We have seen that~$I$ is a $(\sigma\!\setminus\!\mq)$-stable model of~$F$.
By Lemma~\ref{lem:defproof}, it follows that~$I \setminus \mq$ is a  
stable model of $F$.
Consider now a stable model $J$ of $F$, and let~$S$ be the set of all
interpretations~$I$ such that $J  = I \setminus \mq$.   We will show that~$S$
contains exactly one stable model of $F \land G$, or equivalently,  
that~there is exactly one interpretation that is
a $(\sigma\!\setminus\!\mq)$-stable model of $F$ and a $\mq$-stable model of $G$ 
in $S$.  
By Lemma \ref{lem:defproof}, any interpretation in~$S$ is a
$(\sigma\!\setminus\!\mq)$-stable model of $F$.  By Lemma \ref{lem:qdef},~$S$
contains exactly one $\mq$-stable model of~$G$.  
\end{proof}

\section{Conclusion}

In this note, we defined and studied stable models for infinitary 
propositional formulas with extensional atoms. The use of extensional atoms 
facilitates a more modular view of logic programs, as evidenced by the Theorem 
on Infinitary Definitions. The proof of this theorem relies on the Splitting 
Theorem, and the proof of that theorem makes critical use of the distinction 
between intensional and extensional atoms. 

\section*{Acknowledgements}

Many thanks to Yuliya Lierler, Dhananjay Raju, and the anonymous referees for 
useful comments. Both authors were partially supported by the National Science 
Foundation under Grant IIS-1422455. 

\bibliographystyle{acmtrans}
\bibliography{bib}

\end{document}